\pdfoutput=1
\documentclass[11pt]{article}

\usepackage[margin=1in]{geometry}
\usepackage{amsmath,amssymb,amsthm,mathtools}
\usepackage{microtype}
\usepackage{aliascnt}
\usepackage{booktabs}
\usepackage{enumitem}
\usepackage{algorithm}
\usepackage{algpseudocode}
\usepackage{url}
\usepackage[hidelinks]{hyperref}
\usepackage[nameinlink,noabbrev]{cleveref}

\newtheorem{theorem}{Theorem}[section]
\newaliascnt{lemma}{theorem}
\newtheorem{lemma}[lemma]{Lemma}
\aliascntresetthe{lemma}
\newaliascnt{proposition}{theorem}
\newtheorem{proposition}[proposition]{Proposition}
\aliascntresetthe{proposition}
\newaliascnt{claim}{theorem}
\newtheorem{claim}[claim]{Claim}
\aliascntresetthe{claim}
\newaliascnt{corollary}{theorem}

\aliascntresetthe{corollary}
\theoremstyle{definition}
\newaliascnt{definition}{theorem}
\newtheorem{definition}[definition]{Definition}
\aliascntresetthe{definition}
\theoremstyle{remark}
\newaliascnt{remark}{theorem}

\aliascntresetthe{remark}

\newcommand{\eps}{\varepsilon}
\newcommand{\OPT}{\mathrm{OPT}}
\newcommand{\obj}{\operatorname{obj}}
\newcommand{\cost}{\operatorname{cost}}
\newcommand{\vol}{\operatorname{vol}}
\newcommand{\TV}{\mathrm{TV}}

\newcommand{\I}{\mathrm{I}}
\newcommand{\Hent}{\mathrm{H}}
\newcommand{\one}{\mathbf{1}}
\newcommand{\E}{\mathbb{E}}
\newcommand{\Prb}{\mathbb{P}}
\newcommand{\Kcal}{\mathcal{K}}
\newcommand{\Ccal}{\mathcal{C}}
\newcommand{\Acal}{\mathcal{A}}

\title{Pseudometric-Weighted Correlation Clustering  via Spectral Preclustering}
\author{
Chenglin Fan\thanks{Seoul National University.}
\and
Dahoon Lee\thanks{New York University.}
\and
Euiwoong Lee\thanks{University of Michigan.}
}
\date{}

\crefname{claim}{claim}{claims}
\Crefname{claim}{Claim}{Claims}
\crefname{proposition}{proposition}{propositions}
\Crefname{proposition}{Proposition}{Propositions}
\crefname{lemma}{lemma}{lemmas}
\Crefname{lemma}{Lemma}{Lemmas}
\crefname{theorem}{theorem}{theorems}
\Crefname{theorem}{Theorem}{Theorems}
\makeatletter
\providecommand{\theHALG@line}{}
\renewcommand{\theHALG@line}{\arabic{algorithm}.\arabic{ALG@line}}
\makeatother

\begin{document}
\maketitle

\begin{abstract}
We study \emph{pseudometric-weighted correlation clustering}, where
every pair of vertices carries a nonnegative disagreement weight
and the weights satisfy the triangle inequality. For every fixed
$\eps>0$, we give a randomized polynomial-time
$(2+\eps)$-approximation, improving the previously best known factor
of $10/3$. Our algorithm extends the cluster-LP framework for
unweighted correlation clustering to pseudometric weights.
The weighted setting requires controlling both the total weight
of admissible pairs and the weighted error in pairwise marginals.

Our spectral preclustering preserves a near-optimal solution while
bounding the total admissible weight by
$\operatorname{poly}(1/\eps)\OPT$, where $\OPT$ is the optimal
clustering cost. An aggregated Ptolemy-type inequality yields a
degree-product bound and a warm start for random walks within
witness clusters, allowing the construction to use walks of constant
length. We sample clusters from a bounded sub-cluster relaxation
using correlated rounding with a randomized stopping time.
An entropy bound and the triangle inequality charge the weighted
marginal error to the admissible pairs rather than to the total
input weight. Repeated sampling and atom-wise coverage corrections
produce an explicit feasible cluster-LP solution supported on
polynomially many clusters, with value at most $(1+\eps)\OPT$.
After rescaling $\eps$, factor-$2$ rounding gives the stated
approximation guarantee.
\end{abstract}

\section{Introduction}

Correlation clustering asks for a partition that agrees with pairwise
similarity and dissimilarity information. In the complete-graph
formulation of Bansal, Blum, and Chawla~\cite{bansal2002correlation},
every pair of vertices is labeled positive or negative, and the
objective is to minimize the number of positive pairs separated by
the partition and negative pairs placed in the same cluster. The
problem has motivated a long line of approximation algorithms, from
LP rounding and pivoting~\cite{charikar2005clustering,
ailon2008aggregating,chawla2015near} to hierarchy-based relaxations
and preclustering~\cite{cohenaddad2022sherali,
cohenaddad2023preclustering}.

In the weighted version, each pair $uv$ has a nonnegative penalty
$w_{uv}$ for violating its label. With unrestricted weights, the
problem is approximation-equivalent to Minimum Multicut and admits
an $O(\log n)$-approximation~\cite{demaine2006weighted}. We study
\emph{pseudometric-weighted correlation clustering}, in which the
penalties themselves satisfy the triangle inequality:
$w_{uv}\le w_{ua}+w_{av}$ for all $u,v,a$. Distinct vertices may
have zero weight between them, and the signs of the pairs are
arbitrary. Let $\OPT$ denote the minimum disagreement cost of a
clustering. Fan, Lee, and Lee~\cite{fan2025pseudometric} obtained a
$10/3$-approximation for this model and showed that this factor is
tight within the standard-LP pivoting framework they analyze.
Improving this guarantee therefore requires going beyond that
framework.

\paragraph{The cluster-LP approach.}
Cao et al.~\cite{cao2024clusterlp} introduced the \emph{cluster LP},
with one variable for each possible cluster, and gave a polynomial-time
construction for the unweighted problem. For every fixed $\eps>0$,
their algorithm returns an explicit feasible solution of value at most
$(1+\eps)\OPT$, where the benchmark is the optimal integral clustering
rather than the fractional LP optimum. We ask whether the same
construction can be carried out for pseudometric weights.

With nonuniform penalties, the counting bounds used in the unweighted
analysis are not enough. Bounding the number of admissible pairs does
not bound their total weight, and a small unweighted average error in
pairwise marginals need not give a small weighted objective error.
Nor is an additive error proportional to the total input weight useful,
since the latter can be arbitrarily larger than $\OPT$. We therefore
need weighted bounds both for the admissible pairs and for the marginal
errors produced by the sampling step.

\subsection{Our results}

Our main result is the following approximation guarantee.

\begin{theorem}[Main approximation theorem]\label{thm:main}
For every fixed $\eps>0$, there is a randomized polynomial-time
algorithm for pseudometric-weighted correlation clustering that,
with probability at least $1-1/n$, returns a clustering of cost
at most $(2+\eps)\OPT$.
\end{theorem}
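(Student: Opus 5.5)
The proof follows the outline in the abstract. It has two parts. First, we construct an explicit feasible solution of the cluster LP with value at most $(1+\eps')\OPT$, supported on polynomially many clusters. Second, we apply a factor-$2$ rounding of the cluster LP whose analysis survives pseudometric weights. The rounding we intend to use is the cluster-LP rounding of Cao et al., or a pivot-based rounding with a per-pair (triangle-wise) charging argument. We will verify that its analysis charges each pair's expected cost to the LP contributions of that same pair, up to a factor of $2$. If so, the argument is linear in the weights, and any nonnegative weights, in particular pseudometric ones, preserve the factor. Composing the two parts with $\eps'=\eps/2$ gives cost $(2+\eps)\OPT$ in expectation. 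Markov's inequality together with $O(\log n)$ independent repetitions, keeping the cheapest output, gives the high-probability statement with probability $1-1/n$.

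The LP construction proceeds in four stages.

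\textbf{Spectral preclustering.} We compute a family of atoms and a set of admissible pairs. We require two things: (a) some clustering of cost at most $(1+\eps')\OPT$ never splits an atom and joins only admissible pairs; (b) the total weight of admissible pairs is $\operatorname{poly}(1/\eps)\OPT$. For (b), the plan is an aggregated Ptolemy-type inequality, $w_{uv}w_{ab}\le w_{ua}w_{vb}+w_{ub}w_{va}$ summed over witness clusters. It bounds the product of weighted degrees of an admissible pair by $\OPT$-charged terms. That bound yields a warm start for random walks inside each witness cluster, so walks of constant length suffice to certify admissibility.

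\textbf{Bounded sub-cluster relaxation.} We solve a Sherali--Adams-style relaxation of constant level on the admissible graph, restricted to clusters of bounded atom-complexity.

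\textbf{Correlated rounding.} From this relaxation we sample clusters by correlated rounding with a randomized stopping time. An entropy/conditioning argument bounds the average total-variation error of the pairwise marginals. The triangle inequality converts this unweighted error into weighted error: $w_{uv}\le w_{ua}+w_{av}$ lets us average over conditioning vertices $a$, so the weighted error is charged to admissible weight. That weight is $O(\eps'\OPT)$ after choosing parameters, rather than proportional to the total input weight.

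\textbf{Coverage correction.} We sample repeatedly, polynomially many times, and apply atom-wise coverage corrections so that every vertex is covered with total mass exactly $1$. Each correction costs only a small weighted amount, again bounded via admissible weight.

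The main obstacle is the step where the weighted marginal error is charged to admissible pairs. The unweighted analysis allows an additive $\eps n^2$-type error, which is useless here. I would first try to prove that for each pair $uv$ the error is at most $\E_a[\text{error}_{ua}+\text{error}_{av}]$ under a suitable weighting of $a$ proportional to the admissible-neighborhood weight. I would then use the Ptolemy degree-product bound to control the resulting sum by $\operatorname{poly}(1/\eps)\OPT$ times the entropy-decrement fraction. Choosing the stopping time randomly makes this fraction $\eps^{O(1)}$.
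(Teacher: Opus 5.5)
Your top-level decomposition is the paper's: an explicit feasible cluster-LP solution from \cref{thm:clusterlp-main}, then a factor-$2$ rounding whose guarantee holds pair by pair and is therefore linear in nonnegative weights (\cref{lem:rounding}). Two points in this layer need fixing.

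\emph{Choice of rounding.} Use the cluster-based rounding: exponential clocks, or equivalently repeatedly choosing a supported $S$ with probability proportional to $z_S$. It gives exactly $\Prb[u,v\text{ together}]=(1-x_{uv})/(1+x_{uv})\le1-x_{uv}$ and $\Prb[u,v\text{ separated}]=2x_{uv}/(1+x_{uv})\le2x_{uv}$. A pivot rounding with triangle-wise charging mixes the LP contributions of different pairs, so it is not per pair. It is the kind of standard-LP pivoting analysis that is tight at $10/3$ in this setting.

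\emph{Slack in the parameters.} With $\eps'=\eps/2$ the expected cost is exactly $(2+\eps)\OPT$, and Markov gives nothing at that threshold. Also, the LP construction succeeds only with high probability, so expectations cannot simply be composed. The paper fixes a $z$ with $L_w(z)\le(1+\eps/8)\OPT$, with failure probability at most $1/(2n)$. It then repeats only the rounding $O(\eps^{-1}\log n)$ times, so the cheapest output costs at most $(2+\eps/2)L_w(z)$ except with probability $1/(2n)$. Finally it uses $(2+\eps/2)(1+\eps/8)\le2+\eps$.

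Your sketch of the construction contains a false inequality and misses the two ideas that make the weighted argument work.

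\emph{The Ptolemy inequality.} Without a factor $2$ it fails for metrics. In the $4$-cycle metric with $a,b$ antipodal and $c,d$ antipodal, $w(a,b)w(c,d)=4$ but $w(a,c)w(b,d)+w(a,d)w(b,c)=2$.

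\emph{Computability of admissibility.} With the factor $2$, the aggregated inequality bounds the pair weight, $w(A,B)<32\,d_C(A)d_C(B)/\vol_C$. That gives the warm start, but it does not bound the total admissible weight. Since the witness clusters are unknown, the paper also regularizes the witness so that $d_C(K)>\tau d(K)$. This lets the local walk certificate be dominated by the computable matrix $\Gamma=DQD^{-1}Q^\top D$, whose entries sum to $\sum_K d(K)\le2\obj(\Kcal)$ by stationarity.

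\emph{Your main-obstacle plan.} An inequality on errors of the form $\mathrm{error}_{uv}\le\E_a[\mathrm{error}_{ua}+\mathrm{error}_{av}]$ is neither available nor needed. An averaged total-variation bound is also too weak once weights are attached. The paper instead conditions on the first seed $a$. Every selected atom other than $K_a$ is admissible to $a$, so $e^{s,a}_{uv}=0$ unless $u,v\in N(a)$, and the triangle inequality is applied to the weights, $w_{uv}\le w_{ua}+w_{av}$. This requires a row bound: for every fixed $v$, $\sum_w e^{s,a}_{vw}\le s\sqrt{\ln2/(2(r-2))}$. That bound comes from three ingredients:
\begin{itemize}
\item Pinsker's inequality for the mutual information between a fixed-horizon terminal indicator $B_v$ and the next seed, whose law is $p_w(U_t)/s$;
\item the chain rule for mutual information;
\item the uniform stopping time.
\end{itemize}
The factor $s$ cancels against the first-seed probability $y^s_a/(sy_\emptyset)$. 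The total error is then $\sqrt{2\ln2/(r-2)}\,w(E_{\mathrm{adm}})/y_\emptyset$, with no Ptolemy bound involved at this stage.

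\emph{Minor points.} The relaxation bounds only the moment order $r$, not the number of atoms per cluster; a witness cluster may contain arbitrarily many atoms, so such a restriction could make the witness infeasible. The coverage correction is paid by $\obj(\Kcal)\le\alpha\OPT$, not by admissible weight.
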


We obtain this guarantee by extending the cluster-LP construction
of Cao et al.~\cite{cao2024clusterlp} to pseudometric weights.

\begin{theorem}[Cluster-LP construction for pseudometric weights]
\label{thm:clusterlp-main}
For every fixed $\eps>0$, there is a randomized polynomial-time
algorithm that, given a pseudometric-weighted correlation clustering
instance, returns, with probability at least $1-1/n$, an explicit
feasible cluster-LP solution $z$ supported on polynomially many sets,
with objective value at most $(1+\eps)\OPT$.
\end{theorem}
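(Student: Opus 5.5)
We follow the pipeline of Cao et al.~\cite{cao2024clusterlp} and replace each counting bound by a weighted bound that charges to $\OPT$ rather than to the total input weight.

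\emph{Step 1 (preclustering).} We compute a preclustering into atoms together with a set of admissible pairs $E_{\mathrm{adm}}$. The goal is to show that some clustering of cost at most $(1+\eps)\OPT$ never splits an atom and joins only admissible pairs, and that $\sum_{uv\in E_{\mathrm{adm}}} w_{uv}\le \operatorname{poly}(1/\eps)\OPT$. To do this we take an optimal clustering $\mathcal{C}^*$ and call a vertex \emph{good} if its weighted cost in $\mathcal{C}^*$ is small relative to its weighted degree inside its cluster. The triangle inequality gives a Ptolemy-type aggregated inequality, from which $w_{uv}\lesssim$ (a product of normalized degrees) for $u,v$ in the same witness cluster. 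This degree-product bound says that the weighted graph restricted to a witness cluster is an expander-like object. A random walk of constant length, started from a warm distribution, therefore mixes within the cluster. We use the resulting walk distributions (spectral embedding) to group vertices into atoms and declare $uv$ admissible when the two walk distributions overlap substantially. Each admissible pair is charged to $\OPT$ through the degree-product bound, which gives the poly$(1/\eps)$ weight bound.

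\emph{Step 2 (bounded sub-cluster relaxation and correlated rounding).} We solve a Sherali--Adams-type relaxation restricted to admissible pairs and atoms, with $r=\operatorname{poly}(1/\eps)$ rounds. We then sample clusters by correlated rounding: condition on a few random vertices, stopping at a randomized time. A standard entropy (pinning) argument bounds the average total-variation error of pairwise marginals over admissible pairs by $O(\eps)$. The weighted refinement uses the triangle inequality $w_{uv}\le w_{ua}+w_{av}$ to redistribute each pair's weight onto admissible pairs through a conditioned vertex $a$. This makes the weighted marginal error $O(\eps)\sum_{E_{\mathrm{adm}}} w\le \eps'\OPT$ after choosing parameters.

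\emph{Step 3 (assembling $z$).} We repeat the sampling polynomially many times, which by Chernoff gives concentration with probability $1-1/n$. We set $z$ to the empirical frequencies and apply atom-wise coverage corrections, adding singleton or atom clusters to fix under- or over-coverage. Each correction is paid for by the marginal error bound, so the result is a feasible cluster-LP solution $z$ with value at most $(1+O(\eps))\OPT$. Rescaling $\eps$ gives the theorem.

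The main obstacle is Step 1: proving the weighted degree-product bound and showing that constant-length walks from a warm start suffice when weights are nonuniform, including zero weights between distinct vertices. We would first try to derive it by averaging the triangle inequality over third vertices $a$ within the witness cluster.
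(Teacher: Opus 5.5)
\textbf{Step~1 (main gap).} You cannot bound $w(E_{\mathrm{adm}})$ by ``charging each admissible pair through the degree-product bound.'' That bound, $w(A,B)\lesssim d_C(A)d_C(B)/\vol_C$, holds only for atoms inside one cluster $C$ of an unknown regularized witness. It needs the four-point inequality $w(a,b)w(c,d)\le2(w(a,c)w(b,d)+w(a,d)w(b,c))$ summed over pairs in $C$. It also needs the cut condition $w^+(S,T)>\frac12w(S,T)$, which turns $\sum_K w(A,K)$ into $2d_C(A)$; averaging the three-point inequality over a third vertex gives only additive bounds.

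A computable overlap test must also admit pairs from different witness clusters, and for those, overlap says nothing about weight. Take atoms $A,B$ of size $m$ with positive pairs of weight $2$ inside each, negative pairs of weight $2$ between them, and positive pairs of weight $1$ to a hub vertex $h$. This is a metric with $\OPT\le m$. On the atom graph $A$--$h$--$B$ the walks from $A$ and $B$ coincide after one step, yet $w(A,B)=2m^2$. Also, a walk on the global graph does not stay inside a witness cluster, so ``mixing within the cluster'' is not available to the algorithm.

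The paper's route has three parts:
\begin{enumerate}[label=(\roman*)]
\item Build atoms first, from a computable constant-factor clustering (the $10/3$ algorithm plus a marking rule), and use an exchange argument to show that some optimum respects them. Good vertices of $\Ccal^\star$ are not computable. At the vertex level the total positive degree $2w^+(E^+)$ need not be $O(\OPT)$, whereas inter-atom degrees satisfy $\sum_K d(K)\le2\obj(\Kcal)=O(\OPT)$.
\item Put the weight into the test itself: admit a pair when $w(A,B)<64\tau^{-(2t_0+1)}\Gamma_{A,B}$, with $\Gamma=DQD^{-1}Q^\top D$. Stationarity, $\one^\top DQ=\one^\top D$, gives $\sum_{A,B}\Gamma_{A,B}=\sum_K d(K)$. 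This bounds $w(E_{\mathrm{adm}})$ with no reference to the witness.
\item Show witness pairs pass the test by dominating the local certificate entrywise by the global one. This uses the retained-degree regularization $d_C(K)>\tau d(K)$ and loses a factor $\tau^{-(2t_0+1)}$. That loss is why $t_0$ must be an absolute constant. One initial non-lazy step secures this: the degree-product bound gives $P_C(A,B)\le32\pi_C(B)$, a warm start independent of $\pi_C(A)$.
\end{enumerate}

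\textbf{Step~3 also fails as written.} Adding clusters only increases coverage, so over-coverage cannot be repaired this way. The paper scales the empirical mass by $1/(1+\eps_1)$ and uses Chernoff, with exact one-vertex marginals $1/y_\emptyset$, to keep every atom's coverage in $[1-\frac{2\eps_1}{1+\eps_1},1]$. It then fills each deficit with the atom itself. This correction is paid by $\frac{2\eps_1}{1+\eps_1}\sum_K\cost(K)=\frac{2\eps_1}{1+\eps_1}\obj(\Kcal)$, not by the pairwise marginal-error bound. Singleton corrections would cost $\Theta(\eps)\cdot\frac12w^+(u,V)$ per vertex, which is not bounded by $\OPT$ here, so the sampler must output unions of atoms. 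If you want Chernoff--Hoeffding concentration of the sampled cost, you also need a range bound relative to $\OPT$. The paper proves $\cost(C)\le\obj(\Kcal)+n\,w(E_{\mathrm{adm}})$ using the fact that every selected atom is admissible to the first seed atom.

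\textbf{Step~2.} An error bound averaged over admissible pairs does not support your charging. After $w_{uv}\le w_{ua}+w_{av}$, each row is multiplied by a different weight $w_{ua}$. You therefore need, for every fixed vertex $v$, the row-sum bound $\sum_w e^{s,a}_{vw}\le s\sqrt{\ln2/(2(r-2))}$. The paper obtains it from a fixed-horizon reference experiment with a single terminal indicator $B_v$, Pinsker's inequality, and the chain rule. The factor $s$ then cancels against the seed probability $y_a^s/(s\,y_\emptyset)$, giving total error $O(w(E_{\mathrm{adm}})/(y_\emptyset\sqrt r))$.
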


The benchmark in \cref{thm:clusterlp-main} is the optimal integral
clustering cost, not the fractional LP optimum.
The cluster-based factor-$2$ rounding applies to nonnegative pair
weights; \cref{lem:rounding} gives an exponential-clock formulation.
Combining this rounding with \cref{thm:clusterlp-main}, rescaling
the accuracy parameter, and amplifying the success probability
proves \cref{thm:main}.
\subsection{Technical overview}

The construction follows the same three stages as in the unweighted
case: preclustering, a bounded-order relaxation, and sampling. The
weighted analysis changes the estimates used in the first and third
stages. Preclustering must bound the total weight of admissible pairs,
and the sampling argument must bound weighted pairwise marginal error.

\paragraph{Metric preclustering via short random walks.}
We first compute a partition $\Kcal$ into \emph{atoms}. The atom
partition has cost $O(\OPT)$, and some optimal clustering respects every
atom. Starting from an atom-respecting optimum, we obtain a regular
near-optimal witness and a computable set $E_{\mathrm{adm}}$ such that
every inter-atom pair co-clustered by the witness is admissible and
\[
  w(E_{\mathrm{adm}})
  \le \operatorname{poly}(1/\eps)\OPT.
\]
The first condition preserves a near-optimal solution; the second is
the weighted substitute for a cardinality bound on admissible pairs.

Consider the positive inter-atom graph $G$, whose edge $KL$ has weight
$w^+(K,L)$ for distinct atoms. Degrees and volumes below exclude
within-atom pairs. By splitting only along unions of atoms, at total
additional cost at most $\eps\OPT$, we may assume two properties for
each nontrivial witness cluster $C$. Every nontrivial atom cut inside
$C$ has more positive than negative weight, and every atom $A\subseteq C$
retains a $\tau=\Theta(\eps)$ fraction of its global positive degree:
$d_C(A)\ge \tau d(A)$. The cut condition gives conductance inside $C$;
the retained-degree condition is used to compare the local walk with a
global walk that is available to the algorithm.

The metric input is the Ptolemy-type inequality
\[
  w(a,b)w(c,d)
  \le
  2\bigl(w(a,c)w(b,d)+w(a,d)w(b,c)\bigr),
\]
which follows for metrics from the Ptolemy property of snowflaked
metrics~\cite{foertschSchroeder2011} and extends directly to
pseudometrics. After aggregation over atoms and use of the cut
condition, it yields
\[
  w(A,B)
  \le O(1)\frac{d_C(A)d_C(B)}{\vol_C},
  \qquad
  \vol_C:=\sum_{K\subseteq C}d_C(K).
\]
The cut condition also gives constant conductance for $G[C]$. Let
$P_C$ be the random-walk transition matrix on $G[C]$ and
$\pi_C(B)=d_C(B)/\vol_C$ its stationary distribution. The displayed
bound implies
$P_C(A,B)\le O(1)\pi_C(B)$ for distinct atoms $A,B\subseteq C$. Thus,
after one non-lazy step, the walk has a constant warm start even when
the starting atom has very small stationary mass.

Applying Cheeger's inequality to the lazy walk $(I+P_C)/2$, for a
sufficiently large absolute constant $t$ the distributions
$q_A:=\delta_A P_C((I+P_C)/2)^t$ satisfy
\[
  \sum_{K\subseteq C}
  \frac{q_A(K)q_B(K)}{\pi_C(K)}
  \ge \frac12
\]
for every pair of atoms $A,B\subseteq C$. Combining this overlap with
the degree-product bound gives a local certificate for $w(A,B)$. The
initial non-lazy step is what makes $t$ independent of the smallest
stationary mass.

The witness cluster $C$ is not known to the algorithm. Let $P$ be the
transition matrix on the global positive inter-atom graph and set
$Q:=P((I+P)/2)^t$. Since $d_C(K)\ge \tau d(K)$ inside a nontrivial
witness cluster and $t$ is constant, the local certificate is dominated,
up to a factor $\operatorname{poly}(1/\eps)$, by the computable matrix
\[
  \Gamma:=DQD^{-1}Q^\top D,
  \qquad D=\operatorname{diag}(d(A):A\in\Kcal).
\]
Thus co-clustered witness atoms $A,B$ satisfy
$w(A,B)\le \operatorname{poly}(1/\eps)\Gamma_{A,B}$, which defines the
admissibility test. No mixing property of the global graph is used;
mixing occurs only inside witness clusters.

Stationarity of the degree vector gives
\[
  \sum_{A,B\in\Kcal}\Gamma_{A,B}
  =\sum_{A\in\Kcal}d(A)
  \le 2\obj(\Kcal)
  =O(\OPT).
\]
Therefore the total weight of all pairs admitted by the global test is
$\operatorname{poly}(1/\eps)\OPT$.

\paragraph{Correlated sampling with weighted error control.}
We solve a weighted bounded sub-cluster LP with local moments of order
$r=\operatorname{poly}(1/\eps)$. It keeps the local consistency and
fixed-size identities needed for sampling, but does not use the
auxiliary cardinality restriction that relates cluster size to the
number of admissible neighbors in the unweighted construction. The
regular witness is feasible for this relaxation.

To sample one cluster, we use a randomized-stopping-time variant of the
Raghavendra--Tan correlated-rounding method~\cite{raghavendra2012cardinality}.
The sampler exposes seed vertices according to the local moments and,
after a uniformly random number of seed steps, includes whole atoms
independently according to their conditional marginals. One-vertex
marginals are exact. For a fixed vertex, Pinsker's inequality and the
chain rule for conditional mutual information bound the residual
correlation averaged over the stopping time. Conditioning on the first
seed $a$ also gives the weighted charging: every selected atom other
than the seed atom is admissible to it, and
$w_{uv}\le w_{ua}+w_{av}$ charges the pairwise error to admissible pairs
incident to $a$.

Writing $y_{uv}$ for the aggregate pair moment and $y_\emptyset$ for
the total cluster mass, the sampled cluster $C$ satisfies
\[
 \sum_{uv\in\binom V2}w_{uv}
 \left|\Prb[u,v\in C]-\frac{y_{uv}}{y_\emptyset}\right|
 \le O\!\left(\frac{w(E_{\mathrm{adm}})}
 {y_\emptyset\sqrt r}\right).
\]
Our choice of $r$ makes the right-hand side at most
$\eps\OPT/y_\emptyset$.

\paragraph{Exact feasibility from sampled clusters.}
We draw independent clusters from this distribution and scale their
empirical masses down slightly. Chernoff bounds keep the coverage of
each atom below one while leaving only a small deficit. Adding each
atom with exactly its deficit restores the coverage constraints, at
additional cost $O(\eps\OPT)$. A separate concentration bound controls
the sampled objective. The final solution has polynomial support and
value at most $(1+O(\eps))\OPT$ with high probability; rescaling the
accuracy parameter gives \cref{thm:clusterlp-main}.

\paragraph{Related developments.}
Subsequent work has accelerated the unweighted cluster-LP
construction~\cite{cao2025sublinear} and obtained approximate dual
separation and improved rounding~\cite{garciasoriano2026dual}.
Cluster LPs have also been used for chromatic correlation clustering:
Abbasi et al.~\cite{abbasi2026weightedchromatic} give a
$(2+\eps)$-approximation, including a weighted model in which edge
weights satisfy probability constraints. Those constraints differ
from the pseudometric disagreement weights considered here.

\subsection{Organization}
\Cref{sec:prelim} introduces the model, metric inequalities,
and the cluster LP. \Cref{sec:precluster} constructs atoms
and admissible pairs; \cref{sec:boundedlp,sec:sampling,sec:construction} develop the bounded relaxation, correlated
sampler, and explicit cluster-LP construction. The detailed
atom-respecting argument appears in \cref{app:atoms}.

\section{Preliminaries}\label{sec:prelim}

\subsection{Pseudometric-weighted correlation clustering}

Let $V$ be a set of $n$ vertices. Every unordered pair of distinct
vertices is labeled either positive or negative, so that
$\binom{V}{2}=E^+\uplus E^-$. A pair $uv$ has a nonnegative
weight $w_{uv}=w_{vu}$, and we set $w_{uu}=0$. The weights form
a pseudometric:
\begin{equation}\label{eq:triangle}
  w_{uv}\le w_{ua}+w_{av}
  \qquad\text{for all }u,v,a\in V.
\end{equation}
The signs are arbitrary and impose no further restrictions on $w$.
For a partition $\Ccal$ of $V$, let $x^{\Ccal}_{uv}$ be one when
$u$ and $v$ belong to different parts and zero otherwise. We write
\begin{equation}\label{eq:mwcc-obj}
  \obj(\Ccal)
  :=\sum_{uv\in E^+}w_{uv}x^{\Ccal}_{uv}
    +\sum_{uv\in E^-}w_{uv}(1-x^{\Ccal}_{uv}),
  \qquad
  \OPT:=\min_{\Ccal}\obj(\Ccal).
\end{equation}
Thus $\OPT$ always refers to the optimal integral clustering cost.

For sets $A,B\subseteq V$, define
\[
  w(A,B):=\sum_{a\in A}\sum_{b\in B}w_{ab}.
\]
We use $w^+(A,B)$ and $w^-(A,B)$ for the corresponding sums over
positive and negative pairs. These are ordered set-pair sums: in
particular, $w(A,A)$ counts each unordered internal pair twice.
We abbreviate $w(\{u\},A)$ to $w(u,A)$, and use the same convention
for $w^+$ and $w^-$. A sum indexed by $uv\in\binom{V}{2}$, in
contrast, counts each unordered pair once.

Let $N_u^+:=\{v\ne u:uv\in E^+\}$. For a set $C$ containing $u$,
\[
  w(u,N_u^+\triangle C)
  =w^+(u,V\setminus C)+w^-(u,C)
\]
is the disagreement weight incident to $u$ when $C$ is its cluster.
All accuracy parameters in the construction are taken in $(0,1]$;
a larger target accuracy can be replaced by $1$.

\subsection{Metric inequalities for sets}

We record the metric inequalities used in the preclustering and
sampling arguments. They remain valid when distances are zero.

\begin{proposition}[Cluster-wise triangle inequality]\label{prop:settriangle}
For nonempty $A,B,C\subseteq V$,
\[
  \frac{w(A,B)}{|A||B|}
  \le \frac{w(A,C)}{|A||C|}
      +\frac{w(B,C)}{|B||C|}.
\]
\end{proposition}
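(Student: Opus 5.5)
The plan is to average the pointwise triangle inequality \eqref{eq:triangle} over all triples in $A\times B\times C$. For every $a\in A$, $b\in B$, $c\in C$ we have $w_{ab}\le w_{ac}+w_{cb}$, including the degenerate cases where some of these vertices coincide or distances vanish, since $w_{uu}=0$ and the inequality holds for all $u,v,a\in V$. The sets need not be disjoint, and the ordered set-pair convention for $w(\cdot,\cdot)$ counts diagonal terms $w_{uu}=0$ harmlessly, so no case distinction is required.

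The key steps, in order, are as follows.

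\emph{Step 1: sum over $c$.} Summing the pointwise inequality over $c\in C$ gives
\[
|C|\,w_{ab}\le w(a,C)+w(b,C).
\]

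\emph{Step 2: sum over $a$ and $b$.} Summing over $a\in A$ and $b\in B$ gives
\[
|C|\,w(A,B)\le |B|\,w(A,C)+|A|\,w(B,C).
\]
Here the first right-hand term picks up a factor $|B|$ from the free index $b$, and the second a factor $|A|$ from $a$. We also use the symmetry $w(C,B)=w(B,C)$, which follows from $w_{uv}=w_{vu}$.

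\emph{Step 3: normalize.} Dividing by $|A||B||C|>0$, which is legitimate because all three sets are nonempty, yields exactly the claimed inequality.

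There is no real obstacle. The only point needing care is the bookkeeping: checking that ordered-pair sums with possibly overlapping sets still satisfy each counting identity. This holds because every identity above is a plain sum over the product index set, and self-pairs contribute $0$.
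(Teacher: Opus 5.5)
Your proposal is correct and is essentially the paper's proof: sum $w_{ab}\le w_{ac}+w_{cb}$ over $A\times B\times C$ and divide by $|A||B||C|$. Your intermediate bookkeeping, $|C|\,w(A,B)\le |B|\,w(A,C)+|A|\,w(B,C)$, simply makes explicit the step the paper leaves implicit.
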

\begin{proof}
Sum $w_{ab}\le w_{ac}+w_{cb}$ over
$(a,b,c)\in A\times B\times C$ and divide by $|A||B||C|$.
\end{proof}

\begin{proposition}[Internal weight versus cross weight]\label{prop:self-cross}
For nonempty $A,B\subseteq V$,
\[
  \frac{w(A,A)}{|A|}\le 2\frac{w(A,B)}{|B|}.
\]
In particular, $w(A,A)\le 2|A|w(u,A)$ for every $u\in V$.
\end{proposition}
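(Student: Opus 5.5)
The plan is to sum the pointwise triangle inequality over a suitable index set, in the same spirit as \cref{prop:settriangle}. For $a,a'\in A$ and $b\in B$, \eqref{eq:triangle} gives $w_{aa'}\le w_{ab}+w_{ba'}$. Summing over all triples $(a,a',b)\in A\times A\times B$, the left side becomes $|B|\,w(A,A)$; the ordered-sum convention is harmless here because $w_{aa}=0$, so diagonal terms contribute nothing. On the right, each of the two terms sums to $|A|\,w(A,B)$, since $w(A,B)$ is an ordered set-pair sum and $w$ is symmetric. This yields
\[
  |B|\,w(A,A)\le 2|A|\,w(A,B),
\]
and dividing by $|A||B|>0$ gives the stated inequality.

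Alternatively, one can apply \cref{prop:settriangle} with both set arguments equal to $A$ and the intermediate set equal to $B$. That gives $w(A,A)/|A|^2\le 2\,w(A,B)/(|A||B|)$, and multiplying by $|A|$ gives the same bound. This is valid because \cref{prop:settriangle} imposes no disjointness assumption, and its proof also counts the diagonal pairs in $w(A,A)$, which vanish.

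For the second claim, take $B=\{u\}$. Then $|B|=1$ and $w(A,\{u\})=w(u,A)$ by symmetry, so the inequality reads $w(A,A)\le 2|A|\,w(u,A)$. This holds for every $u\in V$, including $u\in A$, since nothing in the argument requires $B$ to be disjoint from $A$.

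The only point needing care is the bookkeeping of the ordered-sum convention: $w(A,A)$ counts each unordered internal pair twice, and the right-hand side is likewise an ordered sum. Both sides use the same convention, so the factor $2$ comes out exactly as stated.
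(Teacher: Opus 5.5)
Your proof is correct and essentially the same as the paper's. The paper applies \cref{prop:settriangle} to the sets $A,A,B$, which is your second route; your first route just unrolls that proposition's proof. The paper then obtains the special case by taking $B=\{u\}$, exactly as you do.
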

\begin{proof}
Apply \cref{prop:settriangle} to the sets $A,A,B$.
The second assertion follows by taking $B=\{u\}$.
\end{proof}

The following Ptolemy-type inequality can also be obtained from
the classical Ptolemy property of snowflaked metrics
\cite{foertschSchroeder2011}; we include a direct proof that
applies equally to pseudometrics.
\begin{proposition}[Generalized Ptolemy inequality]\label{prop:ptolemy}
For any $a,b,c,d\in V$,
\[
  w(a,b)w(c,d)
  \le 2\bigl(w(a,c)w(b,d)+w(a,d)w(b,c)\bigr).
\]
Consequently, for arbitrary $A,B,C,D\subseteq V$,
\begin{equation}\label{eq:set-ptolemy}
  w(A,B)w(C,D)
  \le 2\bigl(w(A,C)w(B,D)+w(A,D)w(B,C)\bigr),
\end{equation}
and, in particular,
\begin{equation}\label{eq:ptolemy-special}
  w(A,B)w(C,C)\le 4w(A,C)w(B,C).
\end{equation}
\end{proposition}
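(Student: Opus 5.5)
The plan is to first prove the pointwise inequality for a single quadruple $a,b,c,d$ and then get \eqref{eq:set-ptolemy} and \eqref{eq:ptolemy-special} by summation. For the set version, expand both sides as sums over quadruples. Since $w(A,B)w(C,D)=\sum_{a\in A,b\in B,c\in C,d\in D}w_{ab}w_{cd}$, apply the pointwise inequality to each term and re-sum. The terms $w_{ac}w_{bd}$ then factor as $w(A,C)w(B,D)$, and the terms $w_{ad}w_{bc}$ factor as $w(A,D)w(B,C)$. For \eqref{eq:ptolemy-special}, set $D=C$; both products on the right equal $w(A,C)w(B,C)$, so the bound becomes $4w(A,C)w(B,C)$. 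None of this needs the weights to be positive, so it holds for pseudometrics.

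For the pointwise inequality, write $x=w_{ab}$, $y=w_{cd}$, $p=w_{ac}$, $q=w_{bd}$, $r=w_{ad}$, $s=w_{bc}$. The goal is $xy\le 2(pq+rs)$. The triangle inequality \eqref{eq:triangle}, routed through each of the other two points, gives
\[
x\le\min(p+s,\;r+q),\qquad y\le\min(p+r,\;s+q).
\]
Three symmetries preserve the statement: $a\leftrightarrow b$ swaps $p\leftrightarrow s$ and $q\leftrightarrow r$; $c\leftrightarrow d$ swaps $p\leftrightarrow r$ and $q\leftrightarrow s$; and $(a,b)\leftrightarrow(c,d)$ swaps $r\leftrightarrow s$. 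These act transitively on $\{p,q,r,s\}$, so I may assume a chosen variable is extremal, for instance that $p$ is the maximum.

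The argument then reduces to an elementary inequality: for all nonnegative reals,
\[
\min(p+s,r+q)\cdot\min(p+r,s+q)\le 2(pq+rs).
\]
I would prove this by cases on which of the four candidate products realizes the minimum. The natural pairing uses $x\le r+q$ together with one of the two bounds for $y$.

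I expect this case analysis to be the main obstacle. Naive single-product bounds such as $(r+q)(s+q)$ leave a stray $q^2$ or $rp$ term that need not be dominated, so the correct pairing of bounds must depend on the ordering of $p,q,r,s$. If the direct case split gets messy, I would fall back on the snowflake argument cited before the statement: $\sqrt{w}$ is again a pseudometric. For four points, a metric of the form $\sqrt{w}$ (a snowflake) satisfies a Ptolemy inequality with constant $1$, by the result in \cite{foertschSchroeder2011}; I would rely on that citation for this step rather than derive it from scratch. That inequality reads
\[
\sqrt{xy}\le\sqrt{pq}+\sqrt{rs}.
\]
Squaring and using $2\sqrt{pq\,rs}\le pq+rs$ gives $xy\le 2(pq+rs)$, which is exactly the constant $2$ in the statement. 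Zero distances cause no trouble in either route, since all steps are continuous in the weights.
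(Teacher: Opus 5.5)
Your summation step for the set version and the special case $D=C$ is exactly the paper's. The difference is in the pointwise inequality.

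\textbf{Your route.} Your direct case analysis is only sketched, so as written your proof rests on the fallback: the Ptolemy property of the snowflake $\sqrt{w}$ from the cited Foertsch--Schroeder result, followed by squaring and $2\sqrt{pq\,rs}\le pq+rs$. That is valid, and the paper itself notes the inequality can be obtained this way. The cited result concerns metrics, so your continuity remark needs to be made precise. Apply the result to the genuine metric $w_{uv}+\delta$ (for $u\ne v$) and let $\delta\to0$, or pass to the metric quotient.

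\textbf{The paper's route.} The paper proves the pointwise bound directly. Its argument is the route you abandoned, with no case split. Use your transitivity observation to make $p=w(a,c)$ the \emph{minimum} of $p,q,r,s$, and take $x\le p+s$ and $y\le p+r$. Then $(s-p)(r-p)\ge0$ gives $(p+s)(p+r)\le 2p^2+2rs$, and $p\le q$ finishes. So one pairing suffices once the extremal variable is placed correctly. Even with $p$ maximal, your pairing $(r+q)(s+q)$ works, because $2(pq+rs)-(r+q)(s+q)=2q(p-q)+(q-r)(q-s)$ is nonnegative after a short check of the case where $q$ lies strictly between $r$ and $s$.

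\textbf{Comparison.} The snowflake route gives the sharper intermediate bound $xy\le(\sqrt{pq}+\sqrt{rs})^2$. The paper's route is self-contained and avoids an external theorem stated for metrics. It also handles zero weights with no limiting argument.
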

\begin{proof}
By relabeling the endpoints, assume that $w(a,c)$ is the smallest
of the four cross distances. The triangle inequality gives
\begin{align*}
  w(a,b)w(c,d)
  &\le \bigl(w(a,c)+w(b,c)\bigr)
        \bigl(w(a,c)+w(a,d)\bigr)\\
  &\le 2w(a,c)^2+2w(a,d)w(b,c)\\
  &\le 2\bigl(w(a,c)w(b,d)+w(a,d)w(b,c)\bigr).
\end{align*}
For the middle inequality, use
$(w(b,c)-w(a,c))(w(a,d)-w(a,c))\ge0$.
Summing the pointwise inequality over
$A\times B\times C\times D$ proves \cref{eq:set-ptolemy}.
Taking $D=C$ proves \cref{eq:ptolemy-special}.
\end{proof}

\subsection{The cluster LP}

The cluster LP has a nonnegative variable $z_S$ for each set
$S\subseteq V$. The variable specifies the fractional mass assigned
to $S$ as a cluster. Feasibility requires every vertex to be covered
by total mass one; pair variables record the mass covering both
endpoints:
\begin{align}
  \sum_{S\ni u}z_S&=1
    &&\forall u\in V,\label{eq:clp-vertex}\\
  \sum_{S\supseteq\{u,v\}}z_S&=1-x_{uv}
    &&\forall uv\in\binom{V}{2},\label{eq:clp-pair}\\
  z_S&\ge0
    &&\forall S\subseteq V.\label{eq:clp-nonneg}
\end{align}
The objective is
\begin{equation}\label{eq:cluster-lp-objective}
  \min\quad
  \sum_{uv\in E^+}w_{uv}x_{uv}
  +\sum_{uv\in E^-}w_{uv}(1-x_{uv}).
\end{equation}
Define the cost assigned to a single cluster by
\begin{equation}\label{eq:cluster-cost}
  \cost(S):=\frac12w^+(S,V\setminus S)+\frac12w^-(S,S).
\end{equation}
A positive pair contributes half its weight at each endpoint's
cluster when it is separated, whereas a negative pair contributes
its full weight when its endpoints share a cluster. Accordingly,
\cref{eq:clp-vertex,eq:clp-pair} give the equivalent objective
\begin{equation}\label{eq:cluster-cost-decomp}
  L_w(z)=\sum_{S\subseteq V}\cost(S)z_S.
\end{equation}
Every partition induces a feasible solution of the same cost, so
the LP optimum is at most $\OPT$. Our task is to construct a feasible
solution of value at most $(1+\eps)\OPT$, rather than to approximate
the fractional optimum itself.

We use the cluster-based rounding viewpoint of
Cao et al.~\cite{cao2024clusterlp}. The following exponential-clock
formulation applies directly to arbitrary nonnegative pair weights.

\begin{lemma}[Weighted factor-$2$ cluster-LP rounding]\label{lem:rounding}
Let $z$ be a feasible cluster-LP solution with explicit nonzero
support and objective value $L$. For arbitrary nonnegative pair
weights, there is a randomized rounding, polynomial in $n$ and
$|\operatorname{supp}(z)|$, whose expected clustering cost is at
most $2L$.
\end{lemma}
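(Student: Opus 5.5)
We would prove \cref{lem:rounding} with the cluster-based pivot rounding of Cao et al., written with exponential clocks. Give each support set $S$ with $z_S>0$ an independent clock $T_S\sim\mathrm{Exp}(z_S)$. Process the sets in increasing order of $T_S$. When $S$ is processed, output the cluster $S\cap U$, where $U$ is the set of still-unclustered vertices, provided this is nonempty. By \cref{eq:clp-vertex}, every vertex $u$ lies in sets of total rate one, so the first clock among sets containing $u$ is a.s.\ finite and every vertex is eventually clustered. The running time is polynomial in $n$ and $|\operatorname{supp}(z)|$: we sort the clocks and do one pass.

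By linearity of expectation, it suffices to bound the contribution of each pair $uv$ separately. For a pair $uv$, split the rate-one mass around $u$ and $v$ into three parts. The mass of sets containing both is $a:=1-x_{uv}$. The mass of sets containing exactly one of $u,v$ is $b:=2x_{uv}$, namely $x_{uv}$ for $u$ alone and $x_{uv}$ for $v$ alone. The total rate of sets meeting $\{u,v\}$ is therefore $a+b=1+x_{uv}$. The pair's fate is decided by the first clock among the sets meeting $\{u,v\}$.

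If that first set contains both endpoints and both are unclustered, they are joined. If it contains exactly one endpoint, they are separated. If it contains neither endpoint, only vertices outside $\{u,v\}$ are removed, so the configuration is unchanged. Since the other sets' clocks are memoryless, the first set meeting $\{u,v\}$ is a both-set with probability exactly $a/(a+b)=(1-x)/(1+x)$, where $x=x_{uv}$.

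For a positive pair, the probability of separation is $2x/(1+x)\le 2x$, so the expected cost is at most $2w_{uv}x_{uv}$. For a negative pair, the probability of joining is $(1-x)/(1+x)\le 1-x\le 2(1-x)$, so the expected cost is at most $w_{uv}(1-x_{uv})$. Summing over pairs with nonnegative weights $w_{uv}$ gives expected cost at most $2L$. No metric property is used.

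The main point needing care is the first-arrival claim. We must check that earlier sets containing neither endpoint cannot change which endpoints remain available. When the first set meeting $\{u,v\}$ arrives, both endpoints are still unclustered, so it clusters exactly its intersection with $\{u,v\}$. The event depends only on the relative order of the clocks of sets meeting $\{u,v\}$. Independence of the exponential clocks then gives the ratio of rates exactly.
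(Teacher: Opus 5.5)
Your proposal is correct and follows the paper's proof essentially verbatim. Processing the clocks in increasing order and outputting $S\cap U$ is the same as letting each vertex join the first ringing set that contains it, and your pair analysis is exactly the paper's argument: the first clock among sets meeting $\{u,v\}$ (total rate $1+x_{uv}$) gives $\Prb[u,v\text{ together}]=(1-x_{uv})/(1+x_{uv})$. The step you attribute to memorylessness really only needs the standard fact that each of several independent exponential clocks is the first to ring with probability proportional to its rate, which you also invoke.
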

\begin{proof}
Discard $z_\emptyset$. For every supported nonempty set $S$, draw
an independent exponential clock of rate $z_S$. Each vertex chooses
the first set containing it whose clock rings. Vertices choosing the
same set form a cluster. The coverage constraints ensure that every
vertex has an available set.

Fix $uv$ and let
$q_{uv}:=\sum_{S\supseteq\{u,v\}}z_S=1-x_{uv}$.
The total rate of all clocks whose sets contain at least one endpoint
is $2-q_{uv}=1+x_{uv}$. The endpoints choose the same set precisely
when the first clock in this union contains both endpoints. Hence
\[
  \Prb[u,v\text{ together}]
  =\frac{q_{uv}}{2-q_{uv}}
  =\frac{1-x_{uv}}{1+x_{uv}}.
\]
It follows that
\[
  \Prb[u,v\text{ separated}]
     =\frac{2x_{uv}}{1+x_{uv}}\le2x_{uv},
  \qquad
  \Prb[u,v\text{ together}]
     \le1-x_{uv}.
\]
Apply the first bound to positive pairs and the second to negative
pairs, multiply by $w_{uv}$, and sum.
\end{proof}

For $L>0$, Markov's inequality shows that a single rounding has
probability at least $\eta/(2+\eta)$ of having cost at most
$(2+\eta)L$. Repeating independently
$O(\eta^{-1}\log n)$ times and returning the cheapest clustering
makes the failure probability an arbitrarily small inverse
polynomial in $n$. If $L=0$, the nonnegative cost is zero almost surely.

\section{Preclustering and admissible pairs}\label{sec:precluster}

Instances with at most one vertex are handled by their unique
partition, so assume $n\ge2$. Fix an absolute constant $\beta$
satisfying
\begin{equation}\label{eq:beta-range}
  0<\beta\le\frac{2\sqrt7-5}{6},
\end{equation}
and set $\rho:=20/3$. We first obtain a clustering $\Ccal^0$ with
$\obj(\Ccal^0)\le\rho\OPT$ with high probability, and then refine it
into atoms. We next construct an admissible set that contains the
inter-atom pairs of a suitable near-optimal witness. The witness is
used only in the analysis; the admissibility test is computed from
the input and the atom partition.

\paragraph{Initialization and its failure probability.}
The algorithm of Fan, Lee, and Lee~\cite{fan2025pseudometric} returns
a clustering $\widehat\Ccal$ with
$\E[\obj(\widehat\Ccal)]\le(10/3)\OPT$.
Run it independently $k_{\mathrm{init}}:=\lceil\log_2(8n)\rceil$
times and let $\Ccal^0$ be a cheapest output. For $\OPT>0$, Markov's
inequality bounds the probability that one output has cost greater
than $\rho\OPT$ by $1/2$. Thus, for the initialization event
\[
  \mathcal E_{\mathrm{init}}
  :=\{\obj(\Ccal^0)\le\rho\OPT\},
  \qquad
  \Prb[\mathcal E_{\mathrm{init}}^c]
  \le2^{-k_{\mathrm{init}}}\le\frac1{8n}.
\]
If $\OPT=0$, nonnegativity and the expectation guarantee imply that
every run has cost zero almost surely. Choosing the cheapest output
uses only computable clustering costs; neither evaluating $\OPT$
nor testing $\mathcal E_{\mathrm{init}}$ is required.
The structural analysis below fixes a realization of $\Ccal^0$
satisfying $\mathcal E_{\mathrm{init}}$. The end-to-end probability
bound in \cref{sec:construction} includes initialization failure.

\subsection{Constructing atoms}

For $u\in C\in\Ccal^0$, write
\[
  d_C(u):=w(u,N_u^+\triangle C).
\]
Mark $u$ when its disagreement weight is more than a
$\beta/2$ fraction of $w(u,C)$. If the marked vertices account for
more than a $\beta^2/20$ fraction of $w(C,C)$, split all of $C$
into singletons. Otherwise, isolate only its marked vertices and
keep the remaining vertices together. This is the construction in
\cref{alg:atoms}.

\begin{algorithm}[htbp]
\caption{ConstructAtoms}\label{alg:atoms}
\begin{algorithmic}[1]
\Require A constant-factor clustering $\Ccal^0$
\Ensure A partition $\Kcal$ into atoms
\State $\Kcal\gets\emptyset$
\ForAll{$C\in\Ccal^0$}
  \State $M_C\gets\{u\in C:d_C(u)>\frac\beta2w(u,C)\}$
  \If{$\sum_{u\in M_C}d_C(u)>\frac{\beta^2}{20}w(C,C)$}
    \State Add $\{u\}$ to $\Kcal$ for every $u\in C$
  \Else
    \State Add $\{u\}$ to $\Kcal$ for every $u\in M_C$
    \If{$C\setminus M_C\ne\emptyset$}
      \State Add $C\setminus M_C$ to $\Kcal$
    \EndIf
  \EndIf
\EndFor
\State \Return $\Kcal$
\end{algorithmic}
\end{algorithm}

We call a cluster $C$ \emph{marked} when the test in line 4 succeeds.
A nonsingleton set $K=C\setminus M_C$ retained from an unmarked
cluster is a \emph{core atom}.

\begin{lemma}[Constant-cost atomization]\label{lem:atom-cost}
There is a known absolute constant $\alpha\ge1$, depending only
on $\beta$ and $\rho$, such that
\[
  \obj(\Kcal)\le\alpha\OPT.
\]
Here $\obj(\Kcal)$ is the cost of the partition whose parts are the atoms.
\end{lemma}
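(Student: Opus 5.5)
The plan is to compare $\obj(\Kcal)$ with $\obj(\Ccal^0)\le\rho\OPT$ cluster by cluster. Since $\Kcal$ refines $\Ccal^0$, the only new disagreements are positive pairs inside some $C\in\Ccal^0$ that $\Kcal$ separates. Each such pair has an endpoint that became a singleton, either because $C$ is marked or because that endpoint lies in $M_C$. So it suffices to show, for every $C\in\Ccal^0$,
\[
  \Delta_C := w^+(\text{pairs of } C \text{ separated by } \Kcal) \le c\sum_{u\in C} d_C(u),
\]
with $c$ depending only on $\beta$. Summing over $C$ and using $\sum_{C}\sum_{u\in C}d_C(u)=2\obj(\Ccal^0)$ then gives $\alpha=1+2c\rho$.

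\textbf{Unmarked clusters.} Only pairs with at least one endpoint in $M_C$ are separated. Hence
\[
  \Delta_C\le\sum_{u\in M_C}w(u,C).
\]
For $u\in M_C$ the marking rule gives $w(u,C)<\frac{2}{\beta}d_C(u)$. Therefore $\Delta_C\le\frac2\beta\sum_{u\in M_C}d_C(u)$, and we can take $c=2/\beta$ in this case.

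\textbf{Marked clusters.} Here every internal positive pair may be cut, so
\[
  \Delta_C\le\tfrac12 w^+(C,C)\le\tfrac12 w(C,C).
\]
The marking test says $\sum_{u\in M_C}d_C(u)>\frac{\beta^2}{20}w(C,C)$. It follows that $w(C,C)<\frac{20}{\beta^2}\sum_{u\in C}d_C(u)$, and so $\Delta_C\le\frac{10}{\beta^2}\sum_{u\in C}d_C(u)$.

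Combining the two cases, $c=\max(2/\beta,10/\beta^2)=10/\beta^2$ for $\beta\le1$. This gives $\alpha=1+20\rho/\beta^2$, which is known and depends only on $\beta$ and $\rho$.

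I expect no real obstacle. The only care needed is the bookkeeping with ordered versus unordered sums. The quantity $w(C,C)$ counts each pair twice, and $\sum_u d_C(u)$ counts each disagreement twice, once from each endpoint. Also, $d_C(u)$ is measured against the cluster $C\in\Ccal^0$ and not against the atom, so it is exactly $u$'s share of $\obj(\Ccal^0)$.

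The metric inequalities such as \cref{prop:self-cross} are not needed for this cost bound. They will presumably matter later, when showing that some optimum respects the core atoms.
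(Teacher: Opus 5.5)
Your argument is essentially the paper's proof. It uses the same unmarked/marked case split: in the unmarked case the new positive cut is at most $\sum_{u\in M_C}w(u,C)\le\frac2\beta\sum_{u\in M_C}d_C(u)$, and in the marked case it is at most $\frac12w(C,C)<\frac{10}{\beta^2}\sum_{u\in M_C}d_C(u)$. You then sum using $\sum_C\sum_{u\in C}d_C(u)=2\obj(\Ccal^0)$, exactly as the paper does. There is one small arithmetic slip: your summation gives $\obj(\Kcal)\le(1+20/\beta^2)\obj(\Ccal^0)\le(1+20/\beta^2)\rho\,\OPT$, so the constant should be $\alpha=\rho(1+2c)$. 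Your stated $1+2c\rho$ is smaller by $\rho-1>0$ and does not follow from your bound.
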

\begin{proof}
Splitting an initial cluster can remove negative disagreements;
its cost increase is therefore at most the positive weight newly
cut. For an unmarked $C$, each newly cut pair is incident to a
marked vertex, and
\[
  \sum_{u\in M_C}w(u,C)
  \le\frac{2}{\beta}\sum_{u\in M_C}d_C(u).
\]
For a marked $C$, splitting it into singletons creates positive
cost at most
\[
  \frac12w(C,C)
  <\frac{10}{\beta^2}\sum_{u\in M_C}d_C(u).
\]
The first bound is also valid when $M_C$ is empty. Since
$2/\beta\le10/\beta^2$ and
\[
  \sum_{C\in\Ccal^0}\sum_{u\in C}d_C(u)
  =2\obj(\Ccal^0),
\]
we obtain
\[
  \obj(\Kcal)
  \le\left(1+\frac{20}{\beta^2}\right)\obj(\Ccal^0).
\]
A sufficiently large fixed $\alpha\ge1$ proves the claim.
\end{proof}

\begin{lemma}[Core size]\label{lem:core-size}
If $K=C\setminus M_C$ is a core atom, then
\[
  |C\setminus K|\le\frac\beta5|C|.
\]
\end{lemma}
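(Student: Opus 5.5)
The plan is to turn the two thresholds in \cref{alg:atoms} into an upper and a lower bound on the same quantity, $\sum_{u\in M_C}w(u,C)$, and to link them through \cref{prop:self-cross}. Since $K=C\setminus M_C$, we have $C\setminus K=M_C$, so it suffices to show $|M_C|\le\frac\beta5|C|$ for an unmarked cluster $C$.

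\emph{Upper bound.} Each $u\in M_C$ satisfies $w(u,C)<\frac2\beta d_C(u)$. Summing over $M_C$ and using that $C$ is unmarked, i.e.\ $\sum_{u\in M_C}d_C(u)\le\frac{\beta^2}{20}w(C,C)$, gives
\[
  \sum_{u\in M_C}w(u,C)\le\frac2\beta\cdot\frac{\beta^2}{20}\,w(C,C)=\frac{\beta}{10}\,w(C,C).
\]
This inequality is strict whenever $M_C\neq\emptyset$.

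\emph{Lower bound.} The second assertion of \cref{prop:self-cross}, applied with $A=C$, gives $w(C,C)\le 2|C|\,w(u,C)$ for every $u$. Hence each marked vertex satisfies $w(u,C)\ge w(C,C)/(2|C|)$, and summing over $M_C$,
\[
  |M_C|\,\frac{w(C,C)}{2|C|}\le\sum_{u\in M_C}w(u,C).
\]

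\emph{Combining.} If $w(C,C)>0$, chaining the two displays and dividing by $w(C,C)/(2|C|)$ yields $|M_C|\le\frac\beta5|C|$. In the degenerate case $w(C,C)=0$, the unmarked condition forces $\sum_{u\in M_C}d_C(u)\le0$. But every $u\in M_C$ has $d_C(u)>\frac\beta2w(u,C)\ge0$, so $M_C$ must be empty and the bound holds trivially.

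The only delicate point is this zero-weight case. The division step fails there, and it is handled by the strict inequality in the marking rule. Otherwise the argument is a direct combination of the two thresholds with \cref{prop:self-cross}.
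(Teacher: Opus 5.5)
Your proof is correct and follows the paper's argument. Both combine the marking threshold $d_C(u)>\frac\beta2 w(u,C)$, the unmarked-cluster bound, and $w(u,C)\ge w(C,C)/(2|C|)$ from \cref{prop:self-cross}, and both handle $w(C,C)=0$ via the strict marking inequality forcing $M_C=\emptyset$; the only difference is that the paper argues the positive-weight case by contradiction from $|M_C|>\beta|C|/5$, while you chain the inequalities directly.
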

\begin{proof}
Because $C$ is unmarked,
\[
  \sum_{u\in M_C}d_C(u)\le\frac{\beta^2}{20}w(C,C).
\]
If $w(C,C)=0$, every marked vertex would have strictly positive
$d_C(u)$, contradicting this bound. Thus $M_C$ is empty in that case.
Otherwise, \cref{prop:self-cross} gives
$w(u,C)\ge w(C,C)/(2|C|)$. Consequently, if
$|M_C|>\beta|C|/5$, then
\[
  \sum_{u\in M_C}d_C(u)
  >\frac\beta2\frac{|M_C|}{2|C|}w(C,C)
  >\frac{\beta^2}{20}w(C,C),
\]
a contradiction.
\end{proof}

The next estimate makes the zero-distance case explicit. Its
strict form is the stability inequality used in the exchange argument.

\begin{lemma}[Atom stability]\label{lem:atom-stability}
Let $K$ be a nonsingleton atom and $u\in K$. Define
\[
  \theta_\beta
  :=\frac\beta2\left(1+\frac{2\beta}{5-2\beta}\right)
      +\frac{2\beta}{5-2\beta}<\beta.
\]
Then
\[
  w(u,N_u^+\triangle K)\le\theta_\beta w(u,K).
\]
In particular, if $w(u,K)>0$, then
\begin{equation}\label{eq:atom-stability}
  w(u,N_u^+\triangle K)<\beta w(u,K).
\end{equation}
If $w(u,K)=0$, the incident disagreement weight is zero.
\end{lemma}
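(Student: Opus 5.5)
The proof compares the disagreement of $u$ with respect to $K$ against its disagreement with respect to the original cluster $C\in\Ccal^0$ containing $K$. Since $K$ is nonsingleton, it is a core atom $K=C\setminus M_C$ of an unmarked cluster, and $u\in K$ is unmarked. Write $M:=M_C=C\setminus K$ and $\gamma:=2\beta/(5-2\beta)$. Because $u$ is unmarked,
\[
  w(u,N_u^+\triangle C)=d_C(u)\le\tfrac\beta2 w(u,C).
\]
The sets $N_u^+\triangle K$ and $N_u^+\triangle C$ differ only on $M$. Hence
\[
  w(u,N_u^+\triangle K)\le d_C(u)+w(u,M)\le\tfrac\beta2\bigl(w(u,K)+w(u,M)\bigr)+w(u,M).
\]
It therefore suffices to prove the key estimate $w(u,M)\le\gamma\,w(u,K)$. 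Substituting it gives exactly $\theta_\beta w(u,K)$.

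\textbf{Key estimate.} The plan is to average the triangle inequality over the whole cluster $C$. For $m\in M$ and $c\in C$, we have $w_{um}\le w_{uc}+w_{cm}$. Summing over $m\in M$ and $c\in C$ and dividing by $|C|$ gives
\[
  w(u,M)\le\frac{|M|}{|C|}w(u,C)+\frac{w(M,C)}{|C|}.
\]
The first term is bounded by \cref{lem:core-size}: $|M|/|C|\le\beta/5$.

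For the second term, each $m\in M$ is marked, so $w(m,C)<\frac2\beta d_C(m)$. Since $C$ is unmarked, summing over $M$ gives
\[
  w(M,C)\le\frac2\beta\cdot\frac{\beta^2}{20}w(C,C)=\frac\beta{10}w(C,C).
\]
By \cref{prop:self-cross}, $w(C,C)\le2|C|\,w(u,C)$, so $w(M,C)/|C|\le\frac\beta5 w(u,C)$.

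Combining the two terms gives $w(u,M)\le\frac{2\beta}{5}\bigl(w(u,K)+w(u,M)\bigr)$. Since $\beta<5/2$, we can rearrange this to $w(u,M)\le\gamma\,w(u,K)$. I expect this step to be the main obstacle. Averaging over $K$ alone would leave an uncontrolled $w(K,M)$ term. Averaging over all of $C$ instead lets the markedness of $M$ and \cref{prop:self-cross} absorb the cross weight.

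\textbf{Remaining checks.} The inequality $\theta_\beta<\beta$ is elementary algebra under \cref{eq:beta-range}. With $\beta\le(2\sqrt7-5)/6<0.05$, we have $\gamma<0.041$, so
\[
  \theta_\beta<0.0255\beta+\tfrac{2\beta}{4.9}<\beta.
\]
This gives the strict inequality \cref{eq:atom-stability} when $w(u,K)>0$. When $w(u,K)=0$, the bound $\theta_\beta w(u,K)=0$ forces the nonnegative incident disagreement weight to vanish.
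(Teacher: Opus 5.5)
Your proof is correct and is essentially the paper's argument: the same reduction to $w(u,M)\le\frac{2\beta}{5-2\beta}w(u,K)$ using markedness of $M$, unmarkedness of $C$, \cref{prop:self-cross}, and \cref{lem:core-size}, with the only variation being that you average the triangle inequality over $C$ instead of $K$, which gives the same constant with slightly less bookkeeping---though your aside is inaccurate, since the paper controls the $K$-averaged term directly by $w(M,K)\le w(M,C)\le\frac{\beta}{10}w(C,C)$. There is one arithmetic slip in your last check: $\frac\beta2(1+\gamma)$ is about $0.52\beta$, not $0.0255\beta$, but $0.52\beta+\frac{2\beta}{4.9}<\beta$ still holds (in fact $\theta_\beta<\beta$ is equivalent to $\beta<1/4$).
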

\begin{proof}
Write $K=C\setminus M_C$ and $M=C\setminus K$.
Since $u$ is unmarked,
\[
  d_C(u)\le\frac\beta2w(u,C)
  =\frac\beta2\bigl(w(u,K)+w(u,M)\bigr).
\]
Averaging the triangle inequality over $K$ gives
\begin{equation}\label{eq:uM-triangle}
  w(u,M)\le\frac{|M|}{|K|}w(u,K)
               +\frac{w(M,K)}{|K|}.
\end{equation}
For marked $x$, we have $w(x,K)\le w(x,C)<2d_C(x)/\beta$.
Therefore, also allowing $M=\emptyset$,
\[
  w(M,K)\le\frac2\beta\sum_{x\in M}d_C(x)
          \le\frac\beta{10}w(C,C)
          \le\frac\beta5|C|\bigl(w(u,K)+w(u,M)\bigr).
\]
Substitution into \cref{eq:uM-triangle} yields
\[
  \left(|K|-\frac\beta5|C|\right)w(u,M)
  \le\left(|M|+\frac\beta5|C|\right)w(u,K).
\]
By \cref{lem:core-size}, the coefficient on the left is positive,
and hence
\[
  w(u,M)\le\frac{2\beta}{5-2\beta}w(u,K).
\]
This derivation does not divide by $w(u,K)$. Finally,
\[
  w(u,N_u^+\triangle K)\le d_C(u)+w(u,M)
                       \le\theta_\beta w(u,K).
\]
The inequality $\theta_\beta<\beta$ holds for $\beta<1/4$,
which follows from \cref{eq:beta-range}.
\end{proof}

\subsection{An optimum that respects atoms}

Write $\Kcal\preceq\Ccal$ if every atom is contained in a part of
$\Ccal$.

\begin{lemma}[Atom-respecting optimum]\label{lem:atom-respecting}
There exists an optimal clustering $\Ccal^\star$ such that
$\Kcal\preceq\Ccal^\star$.
\end{lemma}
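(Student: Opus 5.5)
Among all optimal clusterings, choose $\Ccal^\star$ minimizing the number of \emph{atom splits}, meaning the number of pairs $(u,K)$ with $K$ a nonsingleton atom, $u\in K$, and $u$ not in the cluster of $\Ccal^\star$ that contains most of $K$. The goal is to show this number is zero. Suppose some core atom $K$ is split. Let $K_1,\dots,K_m$ be the nonempty intersections of $K$ with clusters $C_1,\dots,C_m$ of $\Ccal^\star$, where $m\ge2$. The exchange I propose is to move a whole piece $K_i$ into another cluster $C_j$ containing a piece of $K$, or to move all of $K$ into a single cluster. The aim is to show that some such move strictly decreases the cost, or keeps it equal while reducing the number of splits. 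Either outcome contradicts the choice of $\Ccal^\star$.

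\textbf{Cost change of one move.} Moving a set $S\subseteq C_i\cap K$ into $C_j$ changes the cost by
\[
  \Delta = w^+(S,C_i\setminus S)-w^-(S,C_i\setminus S) - w^+(S,C_j)+w^-(S,C_j).
\]
I would bound this using \cref{lem:atom-stability}. For $u\in K$, positive weight from $u$ to $C_i\setminus K$ and negative weight from $u$ to $C_j\cap K$ are both disagreements with respect to $K$. Their total is at most $w(u,N_u^+\triangle K)<\beta w(u,K)$ whenever $w(u,K)>0$. Pairs inside $K$ are the main terms: they are mostly positive, since the negative weight inside $K$ is at most $\beta$ times the total. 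The nuisance terms are the pairs from $S$ to vertices outside $K$, both in $C_i$ and in $C_j$. These must be controlled by comparing them with $w(S,K)$ through the triangle inequality, via \cref{prop:settriangle} and \cref{prop:self-cross}.

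\textbf{Averaging argument.} Rather than a single move, I would average over the moves: move each piece $K_i$ to the cluster $C_j$, weighted by $w(K_j,K)$ or by $|K_j|$. Alternatively, compare the cost of $\Ccal^\star$ with that of the clustering obtained by merging $K$ into the cluster holding the largest share of $w(\cdot,K)$. Summing $\Delta$ over these moves, the intra-atom positive weight $\sum_{i\ne j}w^+(K_i,K_j)$ appears with a negative sign. The error terms are bounded by $\beta\sum_{u\in K}w(u,K)$ plus cross terms to outside vertices, and each of these is at most an $O(\beta)$ fraction of the intra-atom cut weight. This is where the constant $\beta\le(2\sqrt7-5)/6$ should come from: a quadratic condition in $\beta$ ensures that the gain beats the loss.

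\textbf{Main obstacle.} The hard step is charging the outside terms, namely $w^-(S,C_j\setminus K)$ and $w^+(S,C_i\setminus K)$. Per vertex, these are bounded only by $\beta w(u,K)$, that is, relative to all of $K$ rather than to the cut between the pieces. If the pieces are very unbalanced, the cut $w(K_i,K\setminus K_i)$ could be small compared with $\beta w(K_i,K)$. The first thing I would try is to always move the smaller side, with size measured by $w(\cdot,K)$. For the lighter piece $K_i$, use the triangle inequality, $w(u,K\setminus K_i)\gtrsim \tfrac{|K\setminus K_i|}{|K|}w(u,K)$ up to a $w(K,K)/|K|$ term, to show that $w(K_i,K\setminus K_i)$ is at least a constant fraction of $w(K_i,K)$. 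This reduces everything to a single-piece inequality, whose strictness is given by \cref{eq:atom-stability}. The zero-distance case $w(u,K)=0$ is handled separately: such a vertex has no incident disagreement, so it can be moved without any cost change, and this reduces the number of splits.
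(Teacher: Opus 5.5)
There is a genuine gap, at exactly the step you flag. Your claim that the outside terms are bounded per vertex by $\beta w(u,K)$ holds for $w^+(S,C_i\setminus K)$ but not for $w^-(S,C_j\setminus K)$. \Cref{lem:atom-stability} controls only $w^+(u,V\setminus K)+w^-(u,K)$. A negative pair from $u\in K$ to a vertex outside $K$ \emph{agrees} with $K$, so stability says nothing about it.

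No choice of piece or weighting repairs this, because moves into existing clusters are the wrong competitors. Here is a configuration satisfying everything your argument uses, namely the triangle inequality and \cref{eq:atom-stability}:
\begin{itemize}
\item Take $K=P\cup\{x\}$ with $|P|=k\ge2$, plus sets $Q,Q'$ outside $K$ with $|Q|=q>k$ and $|Q'|=q'\ge1$.
\item Let all distances inside $P\cup Q$ and inside $\{x\}\cup Q'$ equal $\delta$, and all distances between these two groups equal $1$. This is a metric.
\item Make pairs inside each group positive, make $xP$ positive, and make all other cross pairs negative.
\item For $\delta<\beta/\max\{q,q'\}$, every vertex of $K$ satisfies \cref{eq:atom-stability}.
\end{itemize}
Use the clusters $P\cup Q$ and $\{x\}\cup Q'$. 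Moving $x$ into $P\cup Q$ changes the cost by $q-k+q'\delta>0$. Moving $P$ into $\{x\}\cup Q'$ changes it by $k(q'+q\delta-1)>0$. So the lighter-piece move, the largest-share merge, and every average of these moves strictly increase the cost. Extracting $K$ into a \emph{new} cluster, however, changes the cost by $(kq+q')\delta-k<0$.

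Two ingredients are missing. The first is extraction of $K$ as the improving competitor. Its only losses are positive pairs leaving $K$ and negative pairs between fragments, and stability charges both. Negative pairs to outside vertices become gains.

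The second is a bound $w(A,A)<c_\beta\,w(A,K\setminus A)$ for \emph{every} fragment $A=C\cap K$, because the extraction loss is of order $\beta\sum_C w(C\cap K,K)$. Your metric argument (\cref{prop:self-cross}) gives this only when $|A|\le|K\setminus A|$. It fails for a dominant fragment, for example a huge $A$ plus a singleton, and there you must use optimality. The paper uses two local optimality conditions:
\begin{itemize}
\item No improvement from splitting $C$ along $K$. This gives $w(A,C\setminus K)\le 2w^+(A,C\setminus K)<2\beta w(A,K)$, which is the only available control on negative weight leaving $K$.
\item No improvement from moving the vertex $u\in K\setminus C$ that minimizes $w(u,K\setminus C)$ into $C$. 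Here the triangle inequality through $A$ bounds $w(u,C\setminus K)$.
\end{itemize}

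Your zero-distance step has the same flaw in miniature. The condition $w(u,K)=0$ forces $w(K,K)=0$. Even so, moving $u$ into another cluster can create negative disagreements with outside vertices there, unless you again invoke the split condition. Extracting $K$ avoids this.
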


The proof uses the three exchange operations from the atom
construction: splitting a cluster along an atom, moving one vertex,
and extracting an entire atom into a new cluster. For an atom $K$
with $w(K,K)>0$, optimality against the first two operations and
\cref{lem:atom-stability} imply that every nonempty proper fragment
$C\cap K$ satisfies
\begin{equation}\label{eq:fragment-bound}
  w(C\cap K,C\cap K)
  <\frac{4+16\beta}{1-10\beta}\,w(C\cap K,K\setminus C).
\end{equation}
Summing over fragments then shows that extracting $K$ would strictly
improve a clustering that splits it. If $w(K,K)=0$, the non-strict
form of atom stability instead shows that extraction cannot increase
the cost. The complete argument, including these cases, is given in
\cref{app:atoms}.

\subsection{A regular near-optimal clustering}

We next refine $\Ccal^\star$ only by splitting along unions of atoms.
The resulting witness has a cut condition used for mixing and a
retained-degree condition used for comparing local and global walks.

To distinguish graph volumes from the original set-pair sums, define,
for unions $X,Y$ of atoms,
\[
  w^+_{\Kcal}(X,Y)
  :=\sum_{\substack{K,L\in\Kcal:\ K\subseteq X,\ L\subseteq Y,\ K\ne L}}
       w^+(K,L).
\]
In particular, within-atom weights are excluded from this expression.
For disjoint unions of atoms, $w^+_{\Kcal}(X,Y)=w^+(X,Y)$.

\begin{lemma}[Regular atom-respecting clustering]\label{lem:regular-clustering}
There exists an atom-respecting clustering $\Ccal_1^\star$ of cost
at most $(1+\eps)\OPT$ with the following properties for each
$C\in\Ccal_1^\star$:
\begin{enumerate}[label=(R\arabic*)]
\item\label{it:cut-lower}
For every nontrivial partition $S\uplus T=C$ in which both sides
are unions of atoms,
\[
  w^+(S,T)>\frac12w(S,T).
\]
\item\label{it:leakage}
For every atom $K\subsetneq C$,
\[
  w^+_{\Kcal}(K,C)>
  \frac{\eps}{2\alpha}w^+_{\Kcal}(K,V).
\]
\end{enumerate}
\end{lemma}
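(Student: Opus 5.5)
We start from the atom-respecting optimum $\Ccal^\star$ of \cref{lem:atom-respecting} and refine it by a local-improvement process that only splits clusters along unions of atoms. Every clustering produced this way is atom-respecting.

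Repeat the following two steps until neither applies.

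\emph{Cut step.} If some cluster $C$ has a nontrivial atom-union partition $S\uplus T$ with $w^+(S,T)\le\frac12w(S,T)$, then $w^+(S,T)\le w^-(S,T)$. Splitting $C$ into $S$ and $T$ therefore changes the cost by $w^+(S,T)-w^-(S,T)\le0$.

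\emph{Leakage step.} If some atom $K\subsetneq C$ has $w^+_{\Kcal}(K,C)\le\frac{\eps}{2\alpha}w^+_{\Kcal}(K,V)$, we move $K$ out of $C$ into a singleton cluster. This costs at most $w^+(K,C\setminus K)=w^+_{\Kcal}(K,C)$.

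To guarantee termination, charge a potential to each leakage step. Once an atom has been extracted, it is a whole cluster. Condition~\ref{it:leakage} only concerns atoms $K\subsetneq C$, so that cluster never triggers another leakage step. Hence each atom is extracted at most once. Cut steps strictly increase the number of clusters, which is at most $n$. The process therefore terminates after at most $2n$ steps, and ties in the cut step cause no difficulty. At termination both \ref{it:cut-lower} and \ref{it:leakage} hold by construction. A cut step is the only way an inequality could become an equality, and any such equality triggers another step, which is why strict inequalities appear in the statement.

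It remains to bound the total added cost. Cut steps never increase the cost. The total leakage cost is at most
\[
  \frac{\eps}{2\alpha}\sum_{K\in\Kcal}w^+_{\Kcal}(K,V)
  =\frac{\eps}{\alpha}\,w^+(\text{inter-atom positive pairs})
  \le\frac{\eps}{\alpha}\obj(\Kcal)
  \le\eps\OPT.
\]
Here we use that each inter-atom positive pair is counted twice in the ordered sum, and that every inter-atom positive pair is a disagreement of the atom partition $\Kcal$. The last inequality is \cref{lem:atom-cost}.

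Each atom is charged at most once, against its global degree $w^+_{\Kcal}(K,V)$, which does not depend on the current clustering. So the bound holds regardless of how cut steps and leakage steps interleave. The resulting clustering $\Ccal^\star_1$ has cost at most $(1+\eps)\OPT$.

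The main subtlety is this interleaving: a cut step can shrink the cluster containing an atom and thereby create new leakage violations. The argument above handles it, because the charge per atom uses the global degree rather than the current cluster, and an extracted atom forms a singleton-atom cluster that satisfies both \ref{it:cut-lower} (vacuously, since it has no nontrivial atom-union partition) and \ref{it:leakage} (vacuously, since $K\not\subsetneq C$).
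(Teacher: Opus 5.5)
Your proof is correct and follows the paper's argument. You repeatedly apply violating atom-union splits, note that an (R1)-split changes the cost by $w^+(S,T)-w^-(S,T)\le 0$, and observe that each atom incurs an (R2)-charge at most once, against its clustering-independent degree $w^+_{\Kcal}(K,V)$. Summing gives a total increase of at most $\frac{\eps}{2\alpha}\cdot 2\obj(\Kcal)\le\eps\OPT$, and the process terminates because every step increases the number of clusters.
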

\begin{proof}
Start with $\Ccal^\star$ and repeatedly apply a violating
atom-respecting split. If \ref{it:cut-lower} fails for
$S\uplus T=C$, replace $C$ by $S$ and $T$. The change in cost is
\[
  w^+(S,T)-w^-(S,T)=2w^+(S,T)-w(S,T)\le0.
\]
If \ref{it:leakage} fails for $K\subsetneq C$, split $K$ from $C$.
The cost increase is at most
$\eps w^+_{\Kcal}(K,V)/(2\alpha)$.
Once an atom has been isolated, it cannot incur this charge again.
Moreover,
\[
  \sum_{K\in\Kcal}w^+_{\Kcal}(K,V)
  \le2\obj(\Kcal)\le2\alpha\OPT,
\]
because every positive inter-atom pair is cut by the atom partition.
The total cost increase is therefore at most $\eps\OPT$.
Every operation increases the number of clusters and preserves the
atoms, so the process terminates.
\end{proof}

An atom of zero positive inter-atom degree can be isolated without
creating positive disagreement cost. We exclude such atoms from the
walk matrices and keep them as individual clusters in the witness.
Write $\Kcal_+$ for the remaining atoms. All inverse degree matrices
below are taken only on these positive-degree states.

\subsection{Conductance and warm starts}

Fix a cluster $C\in\Ccal_1^\star$ containing at least two atoms,
and let $\Acal_C:=\{K\in\Kcal:K\subseteq C\}$. The graph $G[C]$
has vertex set $\Acal_C$ and edge weights $w^+(K,L)$ for distinct
atoms. Define
\[
  d_C(K):=\sum_{L\in\Acal_C\setminus\{K\}}w^+(K,L),
  \qquad
  \vol_C:=\sum_{K\in\Acal_C}d_C(K).
\]
These degrees exclude within-atom pairs. Condition
\ref{it:cut-lower} implies that $G[C]$ is connected and that all
$d_C(K)$ are positive. For a union $S$ of atoms in $C$, let
$\vol_C(S):=\sum_{K\subseteq S}d_C(K)$.

\begin{claim}[Conductance]\label{clm:conductance}
For every nontrivial atom partition $S\uplus T=C$,
\[
  \min\{\vol_C(S),\vol_C(T)\}<6w^+(S,T).
\]
Thus the conductance of $G[C]$ is greater than $1/6$.
\end{claim}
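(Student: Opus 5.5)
The plan is to reduce the volume bound to a comparison between internal and cross weight. Then \cref{prop:self-cross} controls the internal weight and \ref{it:cut-lower} controls the cross weight.

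\emph{Expressing the volume.} Fix a nontrivial atom partition $S\uplus T=C$. Summing $d_C(K)$ over the atoms $K\subseteq S$ counts every positive pair between distinct atoms of $S$ twice, in ordered form. It also counts every positive pair between $S$ and $T$ once. Hence
\[
  \vol_C(S)=w^+_{\Kcal}(S,S)+w^+(S,T)\le w(S,S)+w^+(S,T),
\]
and symmetrically for $T$. For the inequality we use that $w^+_{\Kcal}$ drops within-atom pairs and that $w^+\le w$ termwise. The ordered-sum convention for $w(S,S)$ matches the double counting in the volume, so no stray factor of $2$ appears. It therefore suffices to show
\[
  \min\{w(S,S),w(T,T)\}<5w^+(S,T).
\]

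\emph{Bounding the smaller internal weight.} Apply \cref{prop:self-cross} twice, once to the pair $(S,T)$ and once to $(T,S)$:
\[
  \frac{w(S,S)}{|S|}\le 2\frac{w(S,T)}{|T|},
  \qquad
  \frac{w(T,T)}{|T|}\le 2\frac{w(S,T)}{|S|}.
\]
Multiplying the two inequalities cancels the cardinalities and gives $w(S,S)\,w(T,T)\le 4w(S,T)^2$. The minimum of two nonnegative numbers is at most the square root of their product, so
\[
  \min\{w(S,S),w(T,T)\}\le 2w(S,T).
\]
This is the main step. Its key point is that the cardinalities cancel in the product, which a single application of \cref{prop:self-cross} would not achieve. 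The bound also holds when some distances are zero.

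\emph{Using the cut condition.} Condition \ref{it:cut-lower} gives $w(S,T)<2w^+(S,T)$. Since $w(S,T)\ge0$, it also forces $w^+(S,T)>0$, which is what makes the inequalities below strict. Combining the pieces,
\[
  \min\{\vol_C(S),\vol_C(T)\}
  \le 2w(S,T)+w^+(S,T)
  <5w^+(S,T)
  <6w^+(S,T).
\]
The conductance of a cut is $w^+(S,T)/\min\{\vol_C(S),\vol_C(T)\}$. By the display, this ratio exceeds $1/6$ for every nontrivial atom cut, so the conductance of $G[C]$ is greater than $1/6$.
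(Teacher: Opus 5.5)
Your proof is correct and is essentially the paper's argument: bound the volume of one side by its internal weight plus the cross weight, control that internal weight by $2w(S,T)$ via \cref{prop:self-cross}, and finish with \ref{it:cut-lower}. The differences are cosmetic. The paper chooses the side with $|S|\le|T|$ (counting vertices), so a single application of \cref{prop:self-cross} already gives $w(S,S)\le 2w(S,T)$. Your remark that one application cannot achieve this is therefore inaccurate, although your geometric-mean trick is an equally valid way to pick the side. Keeping $w^+(S,T)$ rather than $w(S,T)$ in the volume bound gives you the slightly sharper constant $5$.
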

\begin{proof}
Assume $|S|\le|T|$, where cardinalities count original vertices.
By \cref{prop:self-cross},
\[
  w(S,C)=w(S,S)+w(S,T)
  \le\left(2\frac{|S|}{|T|}+1\right)w(S,T)
  \le3w(S,T).
\]
The positive inter-atom volume on $S$ is no larger than $w(S,C)$.
Together with \ref{it:cut-lower}, this gives
\[
  \min\{\vol_C(S),\vol_C(T)\}
  \le w(S,C)\le3w(S,T)<6w^+(S,T).
\]
\end{proof}

The next comparison converts the metric structure into a warm-start
bound. We state it entirely in terms of inter-atom degrees.

\begin{claim}[Pair-volume comparison]\label{clm:pair-volume}
For distinct atoms $A,B\in\Acal_C$,
\begin{equation}\label{eq:pair-volume}
  w(A,B)<32\frac{d_C(A)d_C(B)}{\vol_C}.
\end{equation}
\end{claim}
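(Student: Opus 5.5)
The plan is to derive \cref{eq:pair-volume} from the aggregated Ptolemy inequality \cref{eq:ptolemy-special}, taking $C$ to be the witness cluster itself:
\[
  w(A,B)\,w(C,C)\le 4\,w(A,C)\,w(B,C).
\]
It then suffices to prove two comparisons and multiply them.
\begin{itemize}
\item A \emph{lower bound on the denominator}: $w(C,C)\ge\vol_C$.
\item \emph{Upper bounds on the numerator factors}: $w(A,C)\lesssim d_C(A)$ and $w(B,C)\lesssim d_C(B)$, with constants whose product, times $4$, is at most $32$ (with strictness somewhere).
\end{itemize}

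The denominator bound is immediate. The sum $\vol_C=\sum_K d_C(K)$ counts the ordered positive inter-atom pairs inside $C$, and this is a sub-sum of the ordered sum $w(C,C)$ because all weights are nonnegative.

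For the numerator, split $w(A,C)=w(A,C\setminus A)+w(A,A)$. Since $C$ contains at least two atoms, $A\uplus(C\setminus A)$ is a nontrivial atom partition. Hence \ref{it:cut-lower} gives
\[
  w(A,C\setminus A)<2w^+(A,C\setminus A)=2d_C(A).
\]
This strict inequality is where the strictness of \cref{eq:pair-volume} will come from.

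The remaining term is the within-atom weight $w(A,A)$, which is invisible to $d_C(A)$. I would control it with \cref{prop:self-cross}:
\[
  w(A,A)\le 2\frac{|A|}{|C\setminus A|}\,w(A,C\setminus A).
\]
This is good when $|A|\le|C\setminus A|$, as in the proof of \cref{clm:conductance}, and gives $w(A,C)\le3w(A,C\setminus A)<6d_C(A)$.

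I expect this within-atom term to be the main obstacle, for two reasons.
\begin{itemize}
\item \emph{Constant budget.} Applying the bound above to both atoms gives $4\cdot 6\cdot 6=144$, not $32$, so a cruder route has to be sharpened.
\item \emph{Heavy atom.} At most one of $A,B$ can contain more than half the vertices of $C$, and for that atom the ratio $|A|/|C\setminus A|$ is unbounded.
\end{itemize}

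To handle both issues I would not use the full cluster $C$ as the Ptolemy reference set. Instead I would apply the general form \cref{eq:set-ptolemy}, with the two reference sets chosen to exclude the heavy atom. For instance, when $|A|\ge|C|/2$, take the reference set $C\setminus A$; when neither atom is heavy, take $C$ or $C\setminus(A\cup B)$.
\begin{itemize}
\item In the heavy case the cross terms become $w(A,C\setminus A)<2d_C(A)$ and $w(B,C\setminus A)\le w(B,C)$, and the denominator $w(C\setminus A,C\setminus A)$ must then be compared with $\vol_C$.
\item For that comparison, note that $\vol_C\le 2\vol_C(C\setminus A)$ up to the term $d_C(A)=w^+(A,C\setminus A)$, and that $\vol_C(A)=d_C(A)\le\vol_C(C\setminus A)$ holds whenever $A$ is not the whole volume. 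Then, using \cref{prop:self-cross} once more and the fact that $C\setminus A$ is the light side, we get $w(C\setminus A,C)\le 3w(C\setminus A,A)$, which relates everything back to $d_C(A)$.
\end{itemize}

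I would first carry out the bookkeeping in the balanced case to check that the constants multiply to at most $32$. Only if that fails would I fall back on proving a weaker constant, which suffices for the later $\operatorname{poly}(1/\eps)$ uses.
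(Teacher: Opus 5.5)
Your balanced case is correct but only yields the constant $4\cdot6\cdot6=144$. Your heavy case fails at exactly the step you flag.

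\textbf{The heavy case.} With reference set $C\setminus A$ you need $w(C\setminus A,C\setminus A)\gtrsim\vol_C$, and this is false. Indeed $\vol_C=2d_C(A)+w^+_{\Kcal}(C\setminus A,C\setminus A)$, and the $2d_C(A)$ part lives on pairs between $A$ and $C\setminus A$, which that reference set never sees. For instance, let $C=A\cup\{b\}$ with $A$ a nonsingleton atom and $\{b\}$ a singleton atom. Then $w(C\setminus A,C\setminus A)=0<\vol_C$, and the Ptolemy inequality is vacuous.

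The estimate $w(C\setminus A,C)\le3w(C\setminus A,A)$ that you mention points the other way. Combined with $\vol_C(C\setminus A)\le w(C\setminus A,C)$ and \ref{it:cut-lower}, it gives $\vol_C<7d_C(A)$. Together with the trivial bound $w(A,B)\le w(B,C\setminus B)<2d_C(B)$, this settles the heavy case directly, without Ptolemy, with constant $14$.

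\textbf{What the repair gives.} Your approach can therefore be fixed, but only to the constant $144$. Absorbing $w(A,A)$ through \cref{prop:self-cross} costs up to a factor $3$ on top of the factor $2$ from \ref{it:cut-lower}, for each of $A$ and $B$. That is weaker than the stated claim, though it would suffice downstream after adjusting $31$, $t_0$, and the admissibility constant.

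\textbf{The missing idea.} Keep the within-atom weights of $A$ and $B$ out of the argument entirely. Put $m=w(A,B)$, $a=w(A,C\setminus A)$, $b=w(B,C\setminus B)$. Let $W$ be the ordered sum of $w(K,L)$ over distinct atoms $K\ne L$ of $C$, so $W\ge\vol_C$.
\begin{itemize}
\item The pairs of $W$ incident to $A$ or $B$ contribute $2a+2b-2m$. For these use no Ptolemy at all, only $m\le a$ and $m\le b$, which give $m(2a+2b-2m)\le4ab$.
\item For the remaining pairs $K\ne L$ with $K,L\notin\{A,B\}$, apply \cref{eq:set-ptolemy} pairwise: $m\,w(K,L)\le2(w(A,K)w(B,L)+w(A,L)w(B,K))$. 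Summing gives at most another $4ab$.
\end{itemize}
Hence $w(A,B)W\le8ab$. Since \ref{it:cut-lower} gives $a<2d_C(A)$ and $b<2d_C(B)$, this yields $32$ with no heavy/light case split.

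Your alternative reference set $C\setminus(A\cup B)$ was one step away. What it misses is exactly the volume incident to $A$ and $B$, and the trivial bound supplies it.
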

\begin{proof}
Let
\[
  m=w(A,B),\qquad
  a=\sum_{K\in\Acal_C\setminus\{A\}}w(A,K),\qquad
  b=\sum_{K\in\Acal_C\setminus\{B\}}w(B,K),
\]
and let $W=\sum_{K\ne L\in\Acal_C}w(K,L)$, with the sum over
ordered pairs. The part of $W$ incident to $A$ or $B$ equals
$2a+2b-2m$. Since $m\le a,b$,
\[
  m(2a+2b-2m)\le4ab.
\]
For the remaining ordered pairs $K\ne L$, with
$K,L\notin\{A,B\}$, \cref{eq:set-ptolemy} gives
\[
  m\,w(K,L)
  \le2\bigl(w(A,K)w(B,L)+w(A,L)w(B,K)\bigr).
\]
Summing contributes at most another $4ab$, and therefore
\begin{equation}\label{eq:external-metric-volume}
  w(A,B)W\le8ab.
\end{equation}
Apply \ref{it:cut-lower} to $A\uplus(C\setminus A)$ and
$B\uplus(C\setminus B)$ to obtain
$a<2d_C(A)$ and $b<2d_C(B)$. Since $W\ge\vol_C>0$,
\cref{eq:pair-volume} follows.
\end{proof}

Let $P_C$ be the random-walk transition matrix on $G[C]$:
\begin{equation}\label{eq:PC}
  P_C(A,B)=\frac{w^+(A,B)}{d_C(A)}\quad(A\ne B),
  \qquad P_C(A,A)=0.
\end{equation}
Its stationary distribution is
\begin{equation}\label{eq:piC}
  \pi_C(A)=\frac{d_C(A)}{\vol_C}.
\end{equation}
For probability vectors $q$ on $\Acal_C$, write
\[
  \|q-\pi_C\|_{\pi_C^{-1}}^2
  :=\sum_{A\in\Acal_C}\frac{(q(A)-\pi_C(A))^2}{\pi_C(A)}.
\]

\begin{claim}[Warm start]\label{clm:warm-start}
For every $A\in\Acal_C$,
\[
  \|\delta_AP_C-\pi_C\|_{\pi_C^{-1}}^2\le31.
\]
\end{claim}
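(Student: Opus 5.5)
We expand the chi-square distance directly. Writing $q=\delta_AP_C$, so that $q(B)=w^+(A,B)/d_C(A)$ for $B\ne A$ and $q(A)=0$, we have the standard identity
\[
  \|q-\pi_C\|_{\pi_C^{-1}}^2
  =\sum_{B\in\Acal_C}\frac{q(B)^2}{\pi_C(B)}-1 .
\]
It therefore suffices to prove $\sum_B q(B)^2/\pi_C(B)\le 32$. Substituting the definitions of $P_C$ and $\pi_C$ gives
\[
  \sum_{B\ne A}\frac{q(B)^2}{\pi_C(B)}
  =\sum_{B\ne A}\frac{w^+(A,B)}{d_C(A)}\cdot
   \frac{w^+(A,B)\,\vol_C}{d_C(A)\,d_C(B)} .
\]

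The key step is to bound the second factor by a constant using \cref{clm:pair-volume}. Since $w^+(A,B)\le w(A,B)$, \cref{eq:pair-volume} gives, for every $B\ne A$,
\[
  \frac{w^+(A,B)\,\vol_C}{d_C(A)\,d_C(B)}
  \le\frac{w(A,B)\,\vol_C}{d_C(A)\,d_C(B)}<32 .
\]
All degrees are positive because $G[C]$ is connected (from \ref{it:cut-lower}), so the divisions are legitimate. In other words, $P_C(A,B)<32\,\pi_C(B)$ pointwise; this is exactly the warm-start statement advertised in the overview.

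Plugging this in, the sum is at most $32\sum_{B\ne A}q(B)=32$, because $q$ is a probability vector supported off $A$. Subtracting the $1$ from the identity yields $\|\delta_AP_C-\pi_C\|_{\pi_C^{-1}}^2<31$, which proves the claim.

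There is no real obstacle once \cref{clm:pair-volume} is available; the substance lies in that claim, via the aggregated Ptolemy inequality and the cut condition. The only points requiring care are verifying the chi-square identity (it uses $\sum_B q(B)=\sum_B\pi_C(B)=1$) and noting that the term $B=A$ contributes $0$ to $\sum_B q(B)^2/\pi_C(B)$, since $q(A)=0$ while $\pi_C(A)>0$.
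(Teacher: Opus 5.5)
Your proposal is correct and follows the paper's proof essentially line for line: you bound $P_C(A,B)/\pi_C(B)$ by $32$ via \cref{clm:pair-volume} and $w^+\le w$, expand the chi-square distance as $\sum_B P_C(A,B)^2/\pi_C(B)-1$, and sum against the probability vector $\delta_AP_C$. Your strict conclusion $<31$ is also valid, because the pair-volume bound is strict and $\delta_AP_C$ puts positive mass on some $B\ne A$.
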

\begin{proof}
For $B\ne A$, \cref{clm:pair-volume} implies
\[
  \frac{P_C(A,B)}{\pi_C(B)}
  =\frac{w^+(A,B)\vol_C}{d_C(A)d_C(B)}\le32.
\]
The same bound holds at $B=A$, where the numerator is zero. Hence
\[
  \|\delta_AP_C-\pi_C\|_{\pi_C^{-1}}^2
  =\sum_B\frac{P_C(A,B)^2}{\pi_C(B)}-1
  \le32\sum_BP_C(A,B)-1=31.
\]
\end{proof}

Set $P_{C,\mathrm{lazy}}=(I+P_C)/2$.
By \cref{clm:conductance} and Cheeger's inequality, its second
largest eigenvalue is at most $143/144$. Its eigenvalues are
nonnegative because the walk is lazy. Taking
\begin{equation}\label{eq:t0}
  t_0:=\left\lceil\frac{\log_{144/143}62}{2}\right\rceil=297
\end{equation}
therefore gives
\begin{equation}\label{eq:mixing}
  \|\delta_AP_CP_{C,\mathrm{lazy}}^{t_0}-\pi_C\|_{\pi_C^{-1}}^2
  \le31\left(\frac{143}{144}\right)^{2t_0}\le\frac12.
\end{equation}
In particular, define $q_A:=\delta_AP_CP_{C,\mathrm{lazy}}^{t_0}$.
Since $q_A$ and $q_B$ have total mass one,
\begin{align}
  \langle q_A,q_B\rangle_{\pi_C^{-1}}
  &=1+\langle q_A-\pi_C,q_B-\pi_C\rangle_{\pi_C^{-1}}\notag\\
  &\ge1-\|q_A-\pi_C\|_{\pi_C^{-1}}
           \|q_B-\pi_C\|_{\pi_C^{-1}}
   \ge\frac12.\label{eq:overlap-lower}
\end{align}
Equivalently,
\begin{equation}\label{eq:overlap-expand}
  \sum_{K\in\Acal_C}\frac{q_A(K)q_B(K)}{d_C(K)}
  \ge\frac{1}{2\vol_C}.
\end{equation}
Combining this with \cref{clm:pair-volume} yields the local certificate
\begin{equation}\label{eq:local-certificate}
  w(A,B)<64d_C(A)d_C(B)
       \sum_{K\in\Acal_C}\frac{q_A(K)q_B(K)}{d_C(K)}
  \qquad(A\ne B).
\end{equation}
The single non-lazy step is important: its warm-start bound removes
dependence on the stationary mass of the starting atom, allowing
$t_0$ to be an absolute constant.

\subsection{A global, computable certificate}

Let $G$ be the positive graph on $\Kcal_+$. Set
$d(K):=w^+_{\Kcal}(K,V)$ and define
\begin{equation}\label{eq:global-P}
  P(K,L)=\frac{w^+(K,L)}{d(K)}\quad(K\ne L),
  \qquad P(K,K)=0,\qquad P_{\mathrm{lazy}}=\frac{I+P}{2}.
\end{equation}
Write $D=\operatorname{diag}(d(K):K\in\Kcal_+)$ and
$D_C=\operatorname{diag}(d_C(K):K\in\Acal_C)$.
For shorthand, put
\[
  \tau:=\frac{\eps}{2\alpha},\qquad
  Q:=PP_{\mathrm{lazy}}^{t_0},\qquad
  Q_C:=P_CP_{C,\mathrm{lazy}}^{t_0}.
\]
Property \ref{it:leakage} gives $d_C(K)>\tau d(K)$ for atoms in a
multi-atom witness cluster. Consequently, entrywise on that cluster,
\begin{align}
  (DP)_{A,B}&=(D_CP_C)_{A,B},\label{eq:DP-local}\\
  (P_{\mathrm{lazy}})_{A,B}
       &\ge\tau(P_{C,\mathrm{lazy}})_{A,B},\label{eq:Plazy-dom}\\
  (D^{-1})_{K,K}&\ge\tau(D_C^{-1})_{K,K}.\label{eq:D-dom}
\end{align}
When multiplying global matrices, restrict the intermediate states
to $\Acal_C$. All omitted terms are nonnegative. There are $t_0$
lazy transitions on each side and one inverse-degree factor, so
\begin{equation}\label{eq:matrix-domination}
  \bigl[D_CQ_CD_C^{-1}Q_C^\top D_C\bigr]_{A,B}
  \le\tau^{-(2t_0+1)}
       \bigl[DQD^{-1}Q^\top D\bigr]_{A,B}.
\end{equation}
This comparison does not require the global graph to mix rapidly.
Define the computable matrix
\[
  \Gamma:=DQD^{-1}Q^\top D.
\]
Extend $\Gamma$ by zero rows and columns for atoms outside $\Kcal_+$.
If $\Kcal_+$ is empty, take $\Gamma=0$ without forming inverse
matrices. By \cref{eq:local-certificate,eq:matrix-domination},
every two distinct atoms in a common witness cluster satisfy
$w(A,B)<64\tau^{-(2t_0+1)}\Gamma_{A,B}$.

\begin{definition}[Admissible pairs]\label{def:eadm}
For vertices $u,v$ in distinct atoms $K_u,K_v$, declare
$uv\in E_{\mathrm{adm}}$ if
\begin{equation}\label{eq:eadm-def}
  w(K_u,K_v)<64\tau^{-(2t_0+1)}\Gamma_{K_u,K_v}.
\end{equation}
Pairs inside the same atom are treated separately and are not
included in $E_{\mathrm{adm}}$. Thus admissibility is constant
across all vertex pairs joining a fixed pair of atoms.
\end{definition}

\begin{theorem}[Preclustering theorem]\label{thm:preclustering}
The atom partition $\Kcal$ and the set $E_{\mathrm{adm}}$ are
computable in polynomial time and satisfy:
\begin{enumerate}[label=(\roman*)]
\item There exists an atom-respecting clustering $\Ccal_1^\star$
      of cost at most $(1+\eps)\OPT$.
\item If two distinct atoms are contained in one cluster of
      $\Ccal_1^\star$, every pair between them is admissible.
\item Their total admissible-pair weight satisfies
\[
  w(E_{\mathrm{adm}})
  \le128\alpha\left(\frac{2\alpha}{\eps}\right)^{2t_0+1}\OPT.
\]
\end{enumerate}
\end{theorem}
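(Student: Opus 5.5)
Computability comes first. The initialization runs the algorithm of Fan, Lee, and Lee $O(\log n)$ times, and \cref{alg:atoms} evaluates only the quantities $d_C(u)$ and $w(C,C)$, so both steps take polynomial time. The matrices $P$, $P_{\mathrm{lazy}}$, $D$ live on at most $n$ states. Since $t_0=297$ is an absolute constant, forming $Q$ and $\Gamma$ takes a constant number of matrix products. The test \cref{eq:eadm-def} is then checked once for each pair of atoms.

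Items (i) and (ii) follow from the earlier development. For (i), take $\Ccal_1^\star$ from \cref{lem:regular-clustering}, applied to the atom-respecting optimum of \cref{lem:atom-respecting}. Atoms of zero positive inter-atom degree are isolated at no extra positive cost, and isolating them cannot raise the cost, since splitting only removes negative disagreements. The resulting clustering still satisfies \ref{it:cut-lower} and \ref{it:leakage} on the remaining clusters, because isolating a zero-degree atom only lowers the left-hand side of \ref{it:cut-lower}; I would verify this, or simply note that such atoms have $w^+(K,\cdot)=0$, so \ref{it:cut-lower} already forces them to be alone. For (ii), fix distinct atoms $A,B$ in a multi-atom cluster $C$. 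Then \cref{eq:local-certificate} gives $w(A,B)<64\,[D_CQ_CD_C^{-1}Q_C^\top D_C]_{A,B}$; expanding the sum over $K$ shows that this bracket is exactly the local certificate. Combining with \cref{eq:matrix-domination} yields \cref{eq:eadm-def}, so every vertex pair between $A$ and $B$ is admissible. The point to double-check is the domination \cref{eq:matrix-domination}: every factor of a restricted product of global matrices dominates the corresponding local factor entrywise, up to the stated powers of $\tau$.

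The substantive step is (iii). Since admissibility is constant on atom pairs, summing over ordered atom pairs gives
\[
2w(E_{\mathrm{adm}})=\sum_{K\ne L\text{ admissible}}w(K,L)\le 64\tau^{-(2t_0+1)}\sum_{K,L\in\Kcal_+}\Gamma_{K,L},
\]
using that all entries of $\Gamma$ are nonnegative; atoms outside $\Kcal_+$ have $\Gamma=0$ and can never be admissible because of the strict inequality. Next I compute $\one^\top\Gamma\one$. The matrices $Q$ and $Q^\top$ are built from $P$ and $P_{\mathrm{lazy}}$, which are row-stochastic and satisfy $dP=d$ (reversibility), so $Q\one=\one$. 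Hence $\Gamma\one=DQD^{-1}Q^\top d$. Since $Q^\top d=(d^\top Q)^\top=d$, we get $D^{-1}d=\one$ and $DQ\one=d$, so $\one^\top\Gamma\one=\sum_K d(K)\le 2\obj(\Kcal)\le2\alpha\OPT$. The last step uses \cref{lem:atom-cost} and the fact that every positive inter-atom pair is cut by the atom partition. This yields $w(E_{\mathrm{adm}})\le 64\alpha\tau^{-(2t_0+1)}\OPT$, which is within the stated bound $128\alpha(2\alpha/\eps)^{2t_0+1}\OPT$, with room to spare for the ordered/unordered factor.

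The main obstacle I anticipate is bookkeeping rather than any new idea. One part is checking the restriction argument behind \cref{eq:matrix-domination}, including the diagonal lazy terms and the fact that $(DP)_{A,B}=w^+(A,B)$ is symmetric. The other is making sure the high-probability event $\mathcal E_{\mathrm{init}}$ is the only randomness. The constant $\alpha$ is valid only on that event, so the theorem should be read conditionally on it, matching the remark that failure probabilities are accounted for in \cref{sec:construction}.
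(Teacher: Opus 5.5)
Your proposal is correct and follows essentially the same route as the paper: (i) and (ii) come from the regular witness, the local certificate and the matrix domination, and (iii) comes from stationarity, which gives $\one^\top\Gamma\one=\sum_K d(K)\le 2\alpha\OPT$. The one difference is in (iii): you note that the ordered sum over admissible atom pairs equals exactly $2w(E_{\mathrm{adm}})$, which is valid because $\Gamma$ is symmetric and the test is therefore well defined on unordered pairs. This gives $64\alpha\tau^{-(2t_0+1)}\OPT$, a factor of $2$ better than the paper, which simply bounds the unordered sum by the full ordered sum.
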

\begin{proof}
The first two assertions follow from
\cref{lem:regular-clustering,eq:local-certificate,eq:matrix-domination}.
For the third, stationarity of the degree vector gives
$\one^\top DQ=\one^\top D$. Therefore
\begin{align*}
  \sum_{A,B\in\Kcal_+}\Gamma_{A,B}
  &=\one^\top DQD^{-1}Q^\top D\one\\
  &=\one^\top DD^{-1}D\one
   =\sum_{K\in\Kcal_+}d(K)
   \le2\obj(\Kcal)\le2\alpha\OPT.
\end{align*}
Sum \cref{eq:eadm-def} over admissible unordered atom pairs and
then bound that sum by the sum over all ordered atom pairs. Since
all entries of $\Gamma$ are nonnegative,
\[
  w(E_{\mathrm{adm}})
  \le64\tau^{-(2t_0+1)}\sum_{A,B}\Gamma_{A,B}
  \le128\alpha\tau^{-(2t_0+1)}\OPT.
\]
The claim is immediate when $\Kcal_+$ is empty. The matrices have
at most $n$ rows, and $t_0$ is an absolute constant, so their
construction takes polynomial time.
\end{proof}

For the remainder of the construction, fix the explicit upper bound
\begin{equation}\label{eq:R-def}
  R_\eps:=\max\left\{1,\alpha,
    128\alpha\left(\frac{2\alpha}{\eps}\right)^{2t_0+1}\right\}.
\end{equation}
Then $w(E_{\mathrm{adm}})\le R_\eps\OPT$ and
$R_\eps=\operatorname{poly}(1/\eps)$. The quantity $R_\eps$ is
known to the algorithm; no evaluation of $\OPT$ is required.

\section{The bounded sub-cluster LP}\label{sec:boundedlp}

Using the explicit bound $R_\eps$ from \cref{eq:R-def}, set
\begin{equation}\label{eq:r-def}
  r:=2+\left\lceil\frac{2\ln2\,R_\eps^2}{\eps^2}\right\rceil.
\end{equation}
Thus $r\ge3$ and $r=\operatorname{poly}(1/\eps)$.
This choice ensures that the correlation error proved below is at
most $\eps\OPT/y_\emptyset$. It uses the known preclustering bound,
not the unknown ratio $w(E_{\mathrm{adm}})/\OPT$.

For each $s\in[n]$ and $S\subseteq V$ with $|S|\le r$, introduce
a variable $y_S^s$. Write
\[
  y_S:=\sum_{s=1}^n y_S^s,
  \qquad Su:=S\cup\{u\},
  \qquad Suv:=S\cup\{u,v\}.
\]
In an integral solution, $y_S^s$ counts size-$s$ clusters containing
$S$. In particular, $y_\emptyset^s$ counts size-$s$ clusters.
This interpretation gives the fixed-size identity
$\sum_u y_{Su}^s=s y_S^s$. The bounded sub-cluster LP is
\begin{align}
 \min\quad
 &\sum_{uv\in E^+}w_{uv}(1-y_{uv})
       +\sum_{uv\in E^-}w_{uv}y_{uv},\label{eq:bdd-obj}\\
 \text{subject to}\quad
 &y_u=1 &&\forall u\in V,\label{eq:bdd-single}\\
 &\frac1s\sum_{u\in V}y_{Su}^s=y_S^s
       &&\forall s\in[n],\ |S|\le r-1,\label{eq:bdd-size}\\
 &y_S^s\ge0
       &&\forall s\in[n],\ |S|\le r,\label{eq:bdd-nonneg}\\
 &y_{uv}=1
       &&\forall uv\text{ within one atom},\label{eq:bdd-atom}\\
 &y_{uv}=0
       &&\forall uv\text{ joining distinct atoms},\notag\\[-0.4ex]
 &&&\hspace{1.7em}uv\notin E_{\mathrm{adm}},\label{eq:bdd-adm}\\
 &0\le\sum_{T'\subseteq T}(-1)^{|T'|}y_{S\cup T'}^s\le y_S^s
       &&\substack{s\in[n],\ S\cap T=\emptyset,\\|S\cup T|\le r.}
       \label{eq:bdd-SA}
\end{align}
The constraint in \cref{eq:bdd-adm} applies only to distinct atoms;
it does not override the within-atom constraint.
The inclusion--exclusion constraints express local consistency.
The LP has $n^{O(r)}$ variables and constraints and can be solved
in polynomial time for fixed $\eps$.

\begin{lemma}[Feasibility and objective bound]\label{lem:bdd-feasible}
The optimum value of the bounded sub-cluster LP is at most
$(1+\eps)\OPT$.
\end{lemma}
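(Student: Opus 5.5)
We will exhibit an explicit feasible solution built from the regular witness $\Ccal_1^\star$ of \cref{thm:preclustering}(i), whose cost is at most $(1+\eps)\OPT$. For each $s\in[n]$ and each $S\subseteq V$ with $|S|\le r$, set
\[
  y_S^s:=\bigl|\{C\in\Ccal_1^\star:\ |C|=s,\ S\subseteq C\}\bigr|,
\]
which is the integral interpretation described before the LP. The aggregate $y_S=\sum_s y_S^s$ then equals $1$ if $S$ lies inside a single cluster of $\Ccal_1^\star$ and $0$ otherwise. When $S\neq\emptyset$, at most one cluster contains $S$, so at most one size-$s$ cluster does.

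We then check the constraints in order.

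\textbf{\cref{eq:bdd-single}.} Each vertex lies in exactly one cluster, so $y_u=1$.

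\textbf{\cref{eq:bdd-size}.} For a fixed cluster $C$ of size $s$ containing $S$, the number of $u\in V$ with $Su\subseteq C$ is exactly $s$. This holds uniformly, including when $u\in S$, since then $Su=S$. Summing over such clusters gives $\sum_u y_{Su}^s=s\,y_S^s$. For $S=\emptyset$ it counts $s$ times the number of size-$s$ clusters, as required.

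\textbf{\cref{eq:bdd-nonneg}.} Nonnegativity is immediate.

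\textbf{\cref{eq:bdd-SA}.} By linearity, the inclusion--exclusion sum is a sum over size-$s$ clusters $C\supseteq S$ of $\sum_{T'\subseteq T}(-1)^{|T'|}\one[T'\subseteq C]$. This inner sum equals $\one[T\cap C=\emptyset]\in\{0,1\}$. Hence the total lies between $0$ and $y_S^s$.

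\textbf{\cref{eq:bdd-atom}.} $\Ccal_1^\star$ is atom-respecting, so any pair within one atom is co-clustered and $y_{uv}=1$.

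\textbf{\cref{eq:bdd-adm}.} Take a pair joining distinct atoms that is co-clustered in $\Ccal_1^\star$. By \cref{thm:preclustering}(ii) it lies in $E_{\mathrm{adm}}$. Contrapositively, every non-admissible inter-atom pair is separated, so $y_{uv}=0$.

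One subtlety needs checking here. \cref{thm:preclustering}(ii) was derived for clusters with at least two atoms, and for atoms in $\Kcal_+$. Zero-degree atoms are kept as individual clusters in the witness, so they are never co-clustered with another atom. The statement therefore covers all co-clustered distinct atoms, and \cref{eq:bdd-adm} holds.

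\textbf{Objective.} For every pair, $1-y_{uv}=x^{\Ccal_1^\star}_{uv}$. The objective \cref{eq:bdd-obj} therefore equals $\obj(\Ccal_1^\star)\le(1+\eps)\OPT$, and the LP optimum is at most this value.

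The main obstacle is the admissibility constraint \cref{eq:bdd-adm}. It relies entirely on the preclustering guarantee (ii) applying to the same witness that certifies (i), including the degenerate atoms excluded from $\Kcal_+$. Everything else is the routine combinatorics of counting integral clusters.
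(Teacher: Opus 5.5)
Your proof is correct and follows the paper's argument: you use the same counting solution $y_S^s$ induced by the witness $\Ccal_1^\star$, verify the constraints one by one in the same way, and observe that the objective equals $\obj(\Ccal_1^\star)$. Your extra remark on zero-degree atoms agrees with the paper, and \ref{it:cut-lower} already forces any atom $K$ with zero positive inter-atom degree to be alone in its witness cluster: taking $S=K$ and $T=C\setminus K$ would require $w^+(K,C\setminus K)>\frac12 w(K,C\setminus K)\ge 0$.
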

\begin{proof}
Use the clustering $\Ccal_1^\star$ supplied by
\cref{thm:preclustering}, and define
\[
  y_S^s:=|\{C\in\Ccal_1^\star:S\subseteq C,\ |C|=s\}|.
\]
Each vertex belongs to one cluster. Each size-$s$ cluster containing
$S$ contributes exactly $s$ to $\sum_u y_{Su}^s$, proving the
fixed-size identity. The same-atom and admissibility constraints
hold because $\Ccal_1^\star$ respects atoms and contains only
admissible inter-atom pairs within its clusters.
The inclusion--exclusion sum counts size-$s$ clusters that contain
$S$ and avoid $T$, so it lies between zero and $y_S^s$.
Finally, the objective equals $\obj(\Ccal_1^\star)$.
\end{proof}

Fix an optimal bounded-LP solution and let
$y_\emptyset:=\sum_s y_\emptyset^s$. Summing the fixed-size
identity for $S=\emptyset$ gives
\[
  \sum_{s=1}^n s y_\emptyset^s
  =\sum_{u\in V}y_u=n,
  \qquad 1\le y_\emptyset\le n.
\]
All conditional ratios below are used only when their denominators
are positive. A size with $y_\emptyset^s=0$ is never sampled;
on an event of probability zero, a conditional procedure may be
defined arbitrarily.

We will use three direct consequences of local consistency.
First, if $S\subseteq S'$ and $|S'|\le r$, then
$0\le y_{S'}^s\le y_S^s$.
Second, if $u,v$ are in the same atom, then
$y_u^s=y_v^s=y_{uv}^s$ for every $s$:
indeed, $y_u^s-y_{uv}^s$ is nonnegative and sums to
$y_u-y_{uv}=0$. Conditioning on a set $U$ with $|U|\le r-2$
therefore preserves this equality,
\[
  y_{Uu}^s=y_{Uv}^s=y_{Uuv}^s.
\]
One way to see the last assertion is to consider the local
Bernoulli distribution on $U\cup\{u,v\}$: the event that
exactly one of $u,v$ is present has probability zero, also after
conditioning on the presence of $U$.
Third, if $a,b$ lie in distinct non-admissible atoms, then
$y_{ab}^s=0$ for every $s$. Hence $y_{Ub}^s=0$ whenever
$a\in U$ and $|U\cup\{b\}|\le r$.
These statements concern bounded local distributions; they do not
assume a global distribution realizing all LP moments.

\section{Sampling one fractional cluster}\label{sec:sampling}

We now turn the local moments into a distribution on actual sets.
The sampler uses a uniformly random seed length, as in the
random-stopping-time argument, and then samples whole atoms.
The analysis distinguishes its actual output from a fixed-horizon
reference experiment used to measure information along the seed
trajectory.

\begin{algorithm}[htbp]
\caption{SampleOneCluster}\label{alg:sample}
\begin{algorithmic}[1]
\Require An optimal bounded-LP solution $\{y_S^s\}$
\Ensure A union $C$ of atoms
\State Sample $s\in[n]$ with probability $y_\emptyset^s/y_\emptyset$
\State Sample $T$ uniformly from $\{1,\ldots,r-2\}$
\State $U\gets\emptyset$
\For{$t=1,\ldots,T$}
  \State Sample $u_t\in V$ with probability
         $y_{Uu_t}^s/(s y_U^s)$
  \State $U\gets U\cup\{u_t\}$
\EndFor
\State $C\gets\emptyset$
\ForAll{$K\in\Kcal$}
  \State Choose a representative $v\in K$
  \State Independently include all of $K$ in $C$ with probability
         $y_{Uv}^s/y_U^s$
\EndFor
\State \Return $C$
\end{algorithmic}
\end{algorithm}

The transition probabilities sum to one by \cref{eq:bdd-size}.
Seeds may repeat; $U_t:=\{u_1,\ldots,u_t\}$ denotes the set of
vertices exposed by time $t$. The local consistency facts above
show that the atom-inclusion probability does not depend on the
representative, since the actual stopping time satisfies
$T\le r-2$. Every seed atom is included with probability one.
Moreover, every selected atom other than the first seed atom is
admissible to that atom. We will use this seed-neighborhood
property, not pairwise admissibility of all selected atoms.

\subsection{Seed trajectories and one-vertex marginals}

For a deterministic horizon $h\le r-1$, run the same seed
transition for $h$ steps. For a fixed vertex $v$, define a terminal
Bernoulli indicator $B_v^{(h)}$ by
\begin{equation}\label{eq:terminal-indicator}
  \Prb[B_v^{(h)}=1\mid s,u_1,\ldots,u_h]
  :=\frac{y_{U_hv}^s}{y_{U_h}^s}.
\end{equation}
This is a reference experiment for one vertex. It is well-defined
using moments of order at most $r$, including when $h=r-1$.
It does not require an additional output cluster or a global
realization of all local moments. For an actual sampled horizon
$h\le r-2$, the membership indicator of $v$ in the output has
these same conditional one-vertex marginals.

Let $\mathcal F_t:=\sigma(u_1,\ldots,u_t)$ denote the ordered
seed history. Although the transition depends only on $U_t$,
using $\mathcal F_t$ keeps the information chain rule explicit.

\begin{claim}[Conditional one-vertex marginal]\label{clm:conditional-marginal}
For $0\le t\le h\le r-1$,
\begin{equation}\label{eq:conditional-one-vertex}
  \Prb[B_v^{(h)}=1\mid s,\mathcal F_t]
  =\frac{y_{U_tv}^s}{y_{U_t}^s}.
\end{equation}
For $0\le t\le T\le r-2$, the actual output therefore satisfies
\[
  \Prb[v\in C\mid s,\mathcal F_t,T]
  =\frac{y_{U_tv}^s}{y_{U_t}^s}.
\]
\end{claim}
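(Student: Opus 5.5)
We argue by backward induction on $t$, from $t=h$ down to $t=0$, using the tower property along the seed filtration $\mathcal F_t$. At $t=h$ the identity \cref{eq:conditional-one-vertex} is exactly the definition \cref{eq:terminal-indicator} of $B_v^{(h)}$, so there is nothing to prove.

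For the inductive step, fix $t<h$ and assume the identity at time $t+1$. Conditioning on $s$ and $\mathcal F_t$, the next seed $u_{t+1}$ is drawn with probability $y^s_{U_tu}/(s\,y^s_{U_t})$. The tower property then gives
\[
  \Prb[B_v^{(h)}=1\mid s,\mathcal F_t]
  =\sum_{u\in V}\frac{y^s_{U_tu}}{s\,y^s_{U_t}}\cdot
   \frac{y^s_{U_tuv}}{y^s_{U_tu}}
  =\frac{1}{s\,y^s_{U_t}}\sum_{u\in V}y^s_{U_tuv}.
\]
Here terms with $y^s_{U_tu}=0$ have zero probability, and $y^s_{U_tuv}\le y^s_{U_tu}$ by monotonicity, so dropping them is harmless. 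Because $|U_t\cup\{v\}|\le t+1\le r-1$, the fixed-size identity \cref{eq:bdd-size} applied to $S=U_t\cup\{v\}$ gives $\sum_u y^s_{U_tvu}=s\,y^s_{U_tv}$. This yields $y^s_{U_tv}/y^s_{U_t}$, which closes the induction.

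The main obstacle I expect is bookkeeping around repeated seeds and set sizes. When $u\in U_t$ or $u=v$, the symbol $U_tuv$ is a smaller set, but the fixed-size identity is stated over all $u\in V$ with $Su:=S\cup\{u\}$, so these cases are already included. We also need every moment used to have order at most $r$. At step $t\le h-1\le r-2$, the largest set is $U_t\cup\{u,v\}$, of size at most $r$. The identity is applied with $|S|\le r-1$, as required, and this is what allows $h=r-1$. Finally, the denominators are positive almost surely, since each sampled seed has positive moment.

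For the actual output, condition additionally on $T$, which is independent of $s$ and the seeds. Given $T=h\le r-2$ and the full history, line 11 of \cref{alg:sample} includes $v$'s atom $K$ with probability $y^s_{U_hv'}/y^s_{U_h}$, where $v'$ is the chosen representative. By the same-atom consistency $y^s_{Uv'}=y^s_{Uv}$ (valid since $|U_h|\le r-2$), this equals $y^s_{U_hv}/y^s_{U_h}$. So the output indicator of $v$ has the same conditional law as $B_v^{(h)}$ given $s,\mathcal F_h$, and the first part with the tower property gives the claim for every $t\le T$.
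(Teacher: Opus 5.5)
Your proof is correct and is essentially the paper's argument. The paper multiplies the telescoping transition probabilities over a whole continuation $(u_{t+1},\ldots,u_h)$ and applies the fixed-size identity $h-t$ times, which is your backward induction unrolled, and it handles the actual output by the same atom-consistency observation with $h=T$.
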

\begin{proof}
Condition on a prefix and on a continuation
$(u_{t+1},\ldots,u_h)$. The transition probabilities telescope,
so that the conditional probability of this continuation is
\[
  \frac{y_{U_h}^s}{s^{h-t}y_{U_t}^s}.
\]
Multiply by the terminal probability in
\cref{eq:terminal-indicator} and sum over continuations:
\[
 \Prb[B_v^{(h)}=1\mid s,\mathcal F_t]
 =\sum_{(u_{t+1},\ldots,u_h)\in V^{h-t}}
     \frac{y_{U_hv}^s}{s^{h-t}y_{U_t}^s}
 =\frac{y_{U_tv}^s}{y_{U_t}^s}.
\]
The last equality applies \cref{eq:bdd-size} successively.
Each application involves a set of size at most $r-1$.
For the actual output, atom consistency gives the same terminal
marginal for $v$, so the same calculation applies with $h=T$.
\end{proof}

\begin{lemma}[Exact one-vertex marginal]\label{lem:one-marginal}
Every $v\in V$ satisfies
\[
  \Prb[v\in C]=\frac1{y_\emptyset}.
\]
\end{lemma}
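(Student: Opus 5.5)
The plan is to apply \cref{clm:conditional-marginal} at $t=0$ and then average over the size $s$ and the horizon $T$, in that order.

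\emph{Conditioning at time $0$.} With $t=0$ we have $U_0=\emptyset$, so for every realized size $s$ and every horizon $T\in\{1,\ldots,r-2\}$ the claim gives
\[
  \Prb[v\in C\mid s,T]=\frac{y_v^s}{y_\emptyset^s}.
\]
This step is legitimate because $T\le r-2$, which is exactly the range covered by the actual-output part of the claim.

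\emph{Averaging over $T$.} The right-hand side does not depend on $T$. Since $T$ is drawn independently of $s$, averaging over $T$ gives
\[
  \Prb[v\in C\mid s]=\frac{y_v^s}{y_\emptyset^s}.
\]

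\emph{Averaging over $s$.} Each size $s$ is sampled with probability $y_\emptyset^s/y_\emptyset$, and sizes with $y_\emptyset^s=0$ are never sampled. Hence
\[
  \Prb[v\in C]
  =\sum_{s:\,y_\emptyset^s>0}\frac{y_\emptyset^s}{y_\emptyset}\cdot\frac{y_v^s}{y_\emptyset^s}
  =\frac{1}{y_\emptyset}\sum_{s:\,y_\emptyset^s>0}y_v^s .
\]

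\emph{Removing the restriction on $s$.} By the monotonicity consequence of local consistency, $0\le y_v^s\le y_\emptyset^s$, so every size with $y_\emptyset^s=0$ also has $y_v^s=0$. The sum therefore equals $\sum_s y_v^s=y_v$, which is $1$ by \cref{eq:bdd-single}. This gives $\Prb[v\in C]=1/y_\emptyset$.

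\emph{Main obstacle.} The only delicate point is justifying that the output membership of $v$ has the conditional marginal $y_{U_Tv}^s/y_{U_T}^s$. The output samples whole atoms through a representative rather than $v$ itself, so this needs the atom-consistency fact $y_{Uu}^s=y_{Uv}^s$ for $|U|\le r-2$. The claim already supplies that fact, so the lemma reduces to the bookkeeping above.
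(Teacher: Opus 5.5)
Your proof is correct and follows the paper's argument: apply \cref{clm:conditional-marginal} at $t=0$, observe that the result does not depend on $T$, and average over $s$ with weights $y_\emptyset^s/y_\emptyset$. Your use of $0\le y_v^s\le y_\emptyset^s$ to drop the restriction $y_\emptyset^s>0$ makes explicit a step the paper leaves implicit when it equates the restricted sum with $y_v/y_\emptyset$.
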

\begin{proof}
By \cref{clm:conditional-marginal} at $t=0$, the conditional
inclusion probability given $s$ is $y_v^s/y_\emptyset^s$,
independently of $T$. Thus
\[
  \Prb[v\in C]
  =\sum_{s:y_\emptyset^s>0}
      \frac{y_\emptyset^s}{y_\emptyset}
      \frac{y_v^s}{y_\emptyset^s}
  =\frac{y_v}{y_\emptyset}=\frac1{y_\emptyset}.
\]
\end{proof}

\subsection{Pairwise marginal error}

Fix $s$ and a first seed $a$ with $y_a^s>0$. We first record the
pair-moment counterpart of the preceding telescoping calculation.
For a fixed horizon $1\le t\le r-2$,
\begin{equation}\label{eq:pair-telescoping}
  \E\left[\frac{y_{U_tvw}^s}{y_{U_t}^s}
       \,\middle|\,s,u_1=a\right]
  =\frac{y_{avw}^s}{y_a^s}.
\end{equation}
Indeed, the probability of the continuation $(u_2,\ldots,u_t)$
is $y_{U_t}^s/(s^{t-1}y_a^s)$; summing its product with the
ratio in \cref{eq:pair-telescoping} and applying the fixed-size
identity $t-1$ times proves the equality. The largest moment
used has order at most $t+2\le r$.

\begin{claim}[Aggregated conditional pair error]\label{clm:conditional-pair-error}
For every $s,a$ with $y_a^s>0$ and every $v\in V$,
\begin{equation}\label{eq:conditional-pair-error}
  \sum_{w\in V}
  \left|\Prb[v,w\in C\mid s,u_1=a]
            -\frac{y_{avw}^s}{y_a^s}\right|
  \le s\sqrt{\frac{\ln2}{2(r-2)}}.
\end{equation}
\end{claim}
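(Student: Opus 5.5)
The plan is to condition on the stopping time and the seed history, and to compare the actual pair probability with the pair moment through the covariance between the terminal indicator of $v$ and the \emph{next} seed. This is the Raghavendra--Tan argument, with the chain rule run along the fixed-horizon reference experiment of horizon $h=r-1$.

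\emph{Step 1: reduce to a conditional covariance.} Fix $s$, $u_1=a$ and $T=t\in\{1,\ldots,r-2\}$, and write $U=U_t$, $p_x:=y_{Ux}^s/y_U^s$.
\begin{itemize}
\item If $w$ lies in the atom of $v$, then $\Prb[v,w\in C\mid s,\mathcal F_t,T=t]=p_v$. By the local consistency facts, $y_{Uvw}^s=y_{Uv}^s$, so the term is exactly zero.
\item Otherwise atoms are included independently, so $\Prb[v,w\in C\mid s,\mathcal F_t,T=t]=p_vp_w$.
\end{itemize}
By \cref{eq:pair-telescoping}, $y_{avw}^s/y_a^s=\E[y_{Uvw}^s/y_U^s\mid s,u_1=a]$ for each fixed $t$. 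Moving the absolute value inside the expectations over $T$ and $\mathcal F_t$, the left side of \cref{eq:conditional-pair-error} is at most
\[
  \frac{1}{r-2}\sum_{t=1}^{r-2}\E\Bigl[\sum_{w}\Bigl|p_vp_w-\frac{y_{Uvw}^s}{y_U^s}\Bigr|\Bigm| s,u_1=a\Bigr].
\]

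\emph{Step 2: interpret the inner sum via the next seed.} In the reference experiment with horizon $h=r-1$, the next seed satisfies $\Prb[u_{t+1}=w\mid s,\mathcal F_t]=p_w/s$. By \cref{clm:conditional-marginal} applied at time $t+1$,
\[
  \Prb[B_v^{(h)}=1\mid s,\mathcal F_t,u_{t+1}=w]=\frac{y_{Uwv}^s}{y_{Uw}^s}.
\]
Hence
\[
  \Bigl|p_vp_w-\frac{y_{Uvw}^s}{y_U^s}\Bigr|
  =p_w\,\Bigl|\Prb[B_v=1\mid\mathcal F_t,u_{t+1}=w]-\Prb[B_v=1\mid\mathcal F_t]\Bigr|.
\]
For Bernoulli laws this absolute difference is a total variation distance. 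Summing over $w$ therefore gives
\[
  s\,\E_{u_{t+1}}\!\bigl[\TV\bigl(\mathcal L(B_v\mid\mathcal F_{t+1}),\,\mathcal L(B_v\mid\mathcal F_t)\bigr)\bigr].
\]
Pinsker's inequality and Jensen's inequality (working in nats) bound this by $s\sqrt{\I(B_v^{(h)};u_{t+1}\mid s,\mathcal F_t)/2}$. The moments used have order $|U_t|+2\le r$, and the horizon $h=r-1$ is admissible by \cref{clm:conditional-marginal}.

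\emph{Step 3: chain rule and averaging.} Averaging over $\mathcal F_t$ turns the pointwise mutual information into the conditional mutual information given $(s,u_1=a)$, again using Jensen to keep the square root outside. By the chain rule,
\[
  \sum_{t=1}^{r-2}\I\bigl(B_v^{(h)};u_{t+1}\mid s,u_1,\ldots,u_t\bigr)
  =\I\bigl(B_v^{(h)};u_2,\ldots,u_{r-1}\mid s,u_1\bigr)
  \le\Hent\bigl(B_v^{(h)}\bigr)\le\ln 2.
\]
Concavity of the square root then gives
\[
  \frac1{r-2}\sum_t s\sqrt{\I_t/2}\le s\sqrt{\frac{\ln2}{2(r-2)}}.
\]

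I expect the main obstacle to be Step 2. One must verify that a single fixed-horizon reference experiment simultaneously realizes all the conditional laws used, for every $t$. Concretely, the law of $B_v$ given $\mathcal F_{t+1}$ must match the moment ratio obtained by summing over continuations. This is exactly \cref{clm:conditional-marginal} at time $t+1$, and it also needs the atom-consistency facts so that the actual output and the reference indicator agree.
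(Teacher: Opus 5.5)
Your proof is correct and follows the paper's argument: the same stopping-time reduction to $\frac1{r-2}\sum_t\E\bigl[\sum_w|p_vp_w-p_{vw}|\bigr]$, a single reference experiment of horizon $r-1$, Pinsker's inequality, the chain rule against the $\ln 2$ entropy budget, and Cauchy--Schwarz over $t$. The only difference is cosmetic and equivalent: you apply Pinsker to each conditional law $\mathcal L(B_v\mid\mathcal F_t,u_{t+1}=w)$ and then Jensen over $w$, whereas the paper applies Pinsker once to the joint law of $(B_v,u_{t+1})$ against the product of its marginals; in your version, terms with $p_w=0$ vanish by local consistency, so the undefined conditional there does no harm.
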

\begin{proof}
All expectations and probabilities in this proof are conditional
on $s,u_1=a$ unless otherwise specified. For a prefix $U_t$, set
\[
 p_v(U_t):=\frac{y_{U_tv}^s}{y_{U_t}^s},\qquad
 p_w(U_t):=\frac{y_{U_tw}^s}{y_{U_t}^s},\qquad
 p_{vw}(U_t):=\frac{y_{U_tvw}^s}{y_{U_t}^s}.
\]
Conditioned on the full prefix $\mathcal F_t$ and on $T=t$, the
inclusion decisions for different atoms are independent. Thus, when
$v,w$ belong to different atoms, their joint inclusion probability
is $p_v(U_t)p_w(U_t)$.
For vertices in the same atom, the actual joint probability equals
$p_v(U_t)=p_{vw}(U_t)$, so their contribution to the left side of
\cref{eq:conditional-pair-error} is zero after averaging.
The first-seed distribution is the same for every allowed value
of $T$, so $T$ remains uniform after conditioning on $s,u_1=a$.
Using \cref{eq:pair-telescoping}, the triangle inequality, and this
uniform stopping distribution, we therefore obtain
\begin{align}
 &\sum_{w\in V}
  \left|\Prb[v,w\in C\mid s,u_1=a]
             -\frac{y_{avw}^s}{y_a^s}\right|\notag\\
 &\qquad\le\frac1{r-2}\sum_{t=1}^{r-2}
   \E\left[\sum_{w\in V}
       |p_v(U_t)p_w(U_t)-p_{vw}(U_t)|\right].
       \label{eq:row-error-average}
\end{align}
Including the same-atom terms on the right only increases this bound.

We now use a single reference experiment with the fixed horizon
$H=r-1$. Starting with $u_1=a$, generate the remaining seeds through
time $H$ using the same transitions, and then draw one Bernoulli
variable $B_v:=B_v^{(H)}$ with the terminal probability in
\cref{eq:terminal-indicator}. The same $B_v$ is used at every time
$t$; it is not the membership indicator obtained by stopping the
actual sampler at that time.

The common probability space can be written directly from the available
local moments. Write
$\mathbf u=(u_2,\ldots,u_H)$ and, for a deterministic sequence
$\mathbf w=(w_2,\ldots,w_H)$, let
$U(\mathbf w)=\{a,w_2,\ldots,w_H\}$. Its joint law is
\begin{align*}
 \Prb[\mathbf u=\mathbf w,B_v=1\mid s,u_1=a]
   &=\frac{y_{U(\mathbf w)v}^s}{s^{H-1}y_a^s},\\
 \Prb[\mathbf u=\mathbf w,B_v=0\mid s,u_1=a]
   &=\frac{y_{U(\mathbf w)}^s-y_{U(\mathbf w)v}^s}
           {s^{H-1}y_a^s}.
\end{align*}
Both expressions are nonnegative by local consistency. Their sum
over all sequences and both values of $B_v$ is one by repeated
application of \cref{eq:bdd-size}. Every moment used has order at
most $H+1=r$. This construction is for the fixed vertex $v$ alone;
it does not assume a global distribution realizing all LP moments
or simultaneous terminal indicators with prescribed joint moments.
By \cref{clm:conditional-marginal} and the seed transition, for
$1\le t\le H-1=r-2$ we have
\begin{align}
  \Prb[B_v=1\mid s,\mathcal F_t]&=p_v(U_t),\notag\\
  \Prb[u_{t+1}=w\mid s,\mathcal F_t]&=\frac{p_w(U_t)}s,\notag\\
  \Prb[B_v=1,u_{t+1}=w\mid s,\mathcal F_t]
    &=\frac{p_{vw}(U_t)}s.
    \label{eq:reference-joint}
\end{align}
If $p_w(U_t)>0$, the last identity follows by multiplying
$p_w(U_t)/s$ by the conditional terminal marginal
$p_{vw}(U_t)/p_w(U_t)$. If $p_w(U_t)=0$, local consistency gives
$p_{vw}(U_t)=0$, so both sides are zero. All three identities
therefore hold on every positive-probability prefix.

For a fixed history $f_t$, let $J_t(f_t)$ be the mutual information
between $B_v$ and $u_{t+1}$ in their conditional joint distribution.
Because $B_v$ is Bernoulli, \cref{eq:reference-joint} gives the exact
identity
\begin{align}
 &\frac1s\sum_w|p_v(U_t)p_w(U_t)-p_{vw}(U_t)|\notag\\
 &\quad=
 \left\|\mathcal L(B_v,u_{t+1}\mid f_t)
   -\mathcal L(B_v\mid f_t)\otimes
       \mathcal L(u_{t+1}\mid f_t)\right\|_{\TV}
 \le\sqrt{\frac{J_t(f_t)}2}.
 \label{eq:conditional-pinsker}
\end{align}
Here total variation is one half of the $\ell_1$ distance;
the two values of the Bernoulli variable contribute equal amounts.
The inequality is Pinsker's inequality, with natural logarithms.

Averaging over histories gives
\[
  \E[J_t(\mathcal F_t)]
  =\I(B_v;u_{t+1}\mid s,u_1=a,\mathcal F_t).
\]
The chain rule applies to the same terminal indicator at every time:
\begin{align}
  \sum_{t=1}^{H-1}
     \I(B_v;u_{t+1}\mid s,u_1=a,\mathcal F_t)
  &=\I(B_v;u_2,\ldots,u_H\mid s,u_1=a)\notag\\
  &\le\Hent(B_v\mid s,u_1=a)\le\ln2.
  \label{eq:information-budget}
\end{align}
Finally, Cauchy--Schwarz over histories and over the stopping time
bounds the right side of \cref{eq:row-error-average} by
\begin{align*}
  \frac{s}{r-2}\sum_{t=1}^{r-2}
     \E\sqrt{\frac{J_t(\mathcal F_t)}2}
  &\le s\sqrt{\frac1{2(r-2)}
        \sum_{t=1}^{r-2}\E[J_t(\mathcal F_t)]}\\
  &\le s\sqrt{\frac{\ln2}{2(r-2)}}.\qedhere
\end{align*}
\end{proof}

\subsection{Charging the error to admissible pairs}

For a seed $a$, let $K_a$ be its atom and define its admissible
vertex neighborhood by
\[
  N(a):=\{u\notin K_a:au\in E_{\mathrm{adm}}\}.
\]
Set
\[
  e_{uv}^{s,a}:=
  \left|\Prb[u,v\in C\mid s,u_1=a]
          -\frac{y_{auv}^s}{y_a^s}\right|.
\]
These errors vanish for pairs within a single atom, for pairs
involving the seed atom, and for pairs with an endpoint outside
$K_a\cup N(a)$. In particular, no error is paid on a pair simply
because it lies inside a large atom.

\begin{lemma}[Total weighted pair error]\label{lem:weighted-error}
The output of \cref{alg:sample} satisfies
\begin{equation}\label{eq:weighted-error}
  \sum_{uv\in\binom{V}{2}}w_{uv}
     \left|\Prb[u,v\in C]-\frac{y_{uv}}{y_\emptyset}\right|
  \le\frac{\eps\OPT}{y_\emptyset}.
\end{equation}
\end{lemma}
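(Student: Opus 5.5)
We decompose the unconditional error by conditioning on the size $s$ and the first seed $u_1=a$, and then charge each conditional error to admissible pairs incident to $a$ through the triangle inequality.

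\textbf{Decomposition.} The first seed has law $\Prb[s,u_1=a]=\frac{y_\emptyset^s}{y_\emptyset}\cdot\frac{y_a^s}{s\,y_\emptyset^s}=\frac{y_a^s}{s\,y_\emptyset}$. Applying the fixed-size identity \cref{eq:bdd-size} to $\{u,v\}$ gives
\[
\sum_a\frac{y_a^s}{s\,y_\emptyset}\cdot\frac{y_{auv}^s}{y_a^s}=\frac{y_{uv}^s}{y_\emptyset},
\]
and summing over $s$ yields $y_{uv}/y_\emptyset$. The triangle inequality then gives
\[
\text{LHS of \cref{eq:weighted-error}}\le\sum_{s,a}\frac{y_a^s}{s\,y_\emptyset}\sum_{uv}w_{uv}\,e^{s,a}_{uv}.
\]

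\textbf{Charging for fixed $s,a$.} Use $w_{uv}\le w_{ua}+w_{av}$. By the vanishing remarks preceding the lemma, $e^{s,a}_{uv}$ is nonzero only when both $u,v\in N(a)$, so each term $w_{ua}$ is the weight of an admissible pair $au$. Hence
\[
\sum_{uv}w_{uv}e^{s,a}_{uv}\le\sum_{u\in N(a)}w_{au}\sum_{v}e^{s,a}_{uv}\le s\sqrt{\tfrac{\ln2}{2(r-2)}}\;w(a,N(a)),
\]
where the last step is \cref{clm:conditional-pair-error} applied with the fixed vertex $u$. In the middle step, each unordered pair receives both $w_{ua}$ and $w_{va}$ once we sum over ordered pairs, so no extra factor appears. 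The factor $s$ cancels the $1/s$ in the seed probability, which leaves
\[
\frac{1}{y_\emptyset}\sqrt{\tfrac{\ln2}{2(r-2)}}\sum_a\Bigl(\sum_s y_a^s\Bigr)w(a,N(a))
=\frac{1}{y_\emptyset}\sqrt{\tfrac{\ln2}{2(r-2)}}\sum_a w(a,N(a)),
\]
using $y_a=1$ from \cref{eq:bdd-single}. Since $\sum_a w(a,N(a))=2w(E_{\mathrm{adm}})\le 2R_\eps\OPT$, the bound becomes $\frac{2R_\eps\OPT}{y_\emptyset}\sqrt{\ln2/(2(r-2))}$. The choice of $r$ in \cref{eq:r-def} gives $r-2\ge 2\ln2\,R_\eps^2/\eps^2$, so this is at most $\eps\OPT/y_\emptyset$. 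The constants close exactly: $2R_\eps\sqrt{\ln 2\cdot\eps^2/(4\ln2\,R_\eps^2)}=\eps$.

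\textbf{Main obstacle.} The delicate point is justifying that $e^{s,a}_{uv}$ vanishes whenever $u$ or $v$ lies outside $N(a)$, or either lies in $K_a$.
\begin{itemize}
\item If $v\notin K_a\cup N(a)$, the LP forces $y_{av}^s=0$. Then the atom of $v$ is never selected once $a$ is a seed, and $y_{auv}^s=0$ as well, so both terms vanish.
\item If $u\in K_a$, then $u$ is selected surely and $y_{auv}^s=y_{av}^s$. In that case the error equals the one-vertex marginal error, which is zero by \cref{clm:conditional-marginal}.
\item Same-atom pairs were already handled inside the claim.
\end{itemize}
With these cases checked, the sum reduces to pairs in $N(a)$, and each charge is to a genuinely admissible pair.
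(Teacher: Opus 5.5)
Your proposal is correct and follows the paper's proof essentially step for step. You decompose over $(s,u_1=a)$ with weight $y_a^s/(s\,y_\emptyset)$, charge each conditional error through $w_{uv}\le w_{ua}+w_{av}$ to admissible pairs incident to $a$, apply the conditional pair-error claim row by row, and close with $\sum_s y_a^s=1$, $\sum_a w(a,N(a))=2w(E_{\mathrm{adm}})$, and the choice of $r$. Your explicit check that $e^{s,a}_{uv}$ vanishes when an endpoint lies in $K_a$ or outside $K_a\cup N(a)$ (via $y_{av}^s=0$ with monotonicity, and via the one-vertex marginal claim) spells out what the paper only asserts in the remark preceding the lemma.
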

\begin{proof}
Fix $s,a$. Only pairs in $N(a)$ can have nonzero error.
Using symmetry and the triangle inequality,
\begin{align}
  \sum_{uv\in\binom{V}{2}}w_{uv}e_{uv}^{s,a}
  &=\frac12\sum_{u,v\in N(a)}w_{uv}e_{uv}^{s,a}\notag\\
  &\le\frac12\sum_{u,v\in N(a)}(w_{ua}+w_{va})e_{uv}^{s,a}\notag\\
  &\le\sum_{u\in N(a)}w_{ua}\sum_{v\in V}e_{uv}^{s,a}\notag\\
  &\le s\sqrt{\frac{\ln2}{2(r-2)}}\sum_{u\in N(a)}w_{ua},
  \label{eq:seed-weighted-error}
\end{align}
where the final line uses \cref{clm:conditional-pair-error}.

The joint sampling probability of $s,a$ is $y_a^s/(s y_\emptyset)$.
Moreover, \cref{eq:bdd-size} gives
\[
 \sum_{s,a:\,y_a^s>0}
    \frac{y_a^s}{s y_\emptyset}\frac{y_{auv}^s}{y_a^s}
 =\frac1{y_\emptyset}\sum_s\frac1s\sum_a y_{uva}^s
 =\frac{y_{uv}}{y_\emptyset}.
\]
Thus the triangle inequality for averaging and
\cref{eq:seed-weighted-error} imply
\begin{align*}
 &\sum_{uv\in\binom{V}{2}}w_{uv}
     \left|\Prb[u,v\in C]-\frac{y_{uv}}{y_\emptyset}\right|\\
 &\quad\le\frac1{y_\emptyset}
       \sqrt{\frac{\ln2}{2(r-2)}}
       \sum_{a\in V}\sum_s y_a^s
                 \sum_{u\in N(a)}w_{ua}\\
 &\quad=\sqrt{\frac{\ln2}{2(r-2)}}
              \frac{2w(E_{\mathrm{adm}})}{y_\emptyset}\\
 &\quad\le\frac{\eps\OPT}{y_\emptyset}.
\end{align*}
Here $\sum_s y_a^s=1$, every admissible unordered pair is counted
twice in the double sum, and the last inequality follows from
\cref{eq:r-def} and $w(E_{\mathrm{adm}})\le R_\eps\OPT$.
\end{proof}

\begin{lemma}[Expected one-cluster cost]\label{lem:expected-cluster-cost}
The sampled cluster satisfies
\[
  \E[\cost(C)]\le\frac{(1+2\eps)\OPT}{y_\emptyset}.
\]
\end{lemma}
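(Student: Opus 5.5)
We plan to write $\cost(C)$ as a linear function of the pairwise co-membership indicators and then compare, pair by pair, with the LP objective divided by $y_\emptyset$. By \cref{eq:cluster-cost}, with ordered set-pair sums,
\[
  \cost(C)=\sum_{uv\in E^+}\tfrac12 w_{uv}\bigl(\one[u\in C]+\one[v\in C]-2\,\one[u,v\in C]\bigr)
  +\sum_{uv\in E^-}w_{uv}\one[u,v\in C].
\]
Indeed, a positive pair with exactly one endpoint in $C$ contributes $\tfrac12 w_{uv}$. A negative pair inside $C$ appears twice in $w^-(C,C)$ and so contributes $w_{uv}$. Taking expectations and applying \cref{lem:one-marginal} replaces each one-vertex term by $1/y_\emptyset$. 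Thus
\[
  \E[\cost(C)]=\sum_{uv\in E^+}w_{uv}\Bigl(\frac1{y_\emptyset}-\Prb[u,v\in C]\Bigr)+\sum_{uv\in E^-}w_{uv}\Prb[u,v\in C].
\]

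Next we replace each $\Prb[u,v\in C]$ by $y_{uv}/y_\emptyset$. For every pair the error is at most $w_{uv}\,|\Prb[u,v\in C]-y_{uv}/y_\emptyset|$, regardless of sign. This gives
\[
  \E[\cost(C)]\le\frac1{y_\emptyset}\Bigl(\sum_{uv\in E^+}w_{uv}(1-y_{uv})+\sum_{uv\in E^-}w_{uv}y_{uv}\Bigr)
  +\sum_{uv}w_{uv}\Bigl|\Prb[u,v\in C]-\frac{y_{uv}}{y_\emptyset}\Bigr|.
\]
The bracket is exactly the bounded-LP objective \cref{eq:bdd-obj} at the fixed optimal solution. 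By \cref{lem:bdd-feasible} it is at most $(1+\eps)\OPT$. The last sum is at most $\eps\OPT/y_\emptyset$ by \cref{lem:weighted-error}. Adding the two bounds gives $(1+2\eps)\OPT/y_\emptyset$.

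The only points needing care are bookkeeping. First, we must check that the $\tfrac12$ factors in \cref{eq:cluster-cost} together with the ordered-sum convention give exactly the per-pair weights above. Second, we must check that $y_{uv}=\sum_s y^s_{uv}$ is the aggregate moment used in both \cref{eq:bdd-obj} and \cref{lem:weighted-error}. No further obstacle is expected, since the heavy lifting was done in \cref{lem:weighted-error}.
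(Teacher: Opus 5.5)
Your proposal is correct and follows essentially the same route as the paper. The paper's proof makes the same per-pair expansion of $\cost(C)$, uses the exact one-vertex marginals from \cref{lem:one-marginal}, replaces $\Prb[u,v\in C]$ by $y_{uv}/y_\emptyset$ at the price of an absolute-error term, and then invokes \cref{lem:bdd-feasible} and \cref{lem:weighted-error}.
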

\begin{proof}
For a positive unordered pair $uv$, the contribution to
$\E[\cost(C)]$ is
\[
  \frac{w_{uv}}2
    \bigl(\Prb[u\in C]+\Prb[v\in C]-2\Prb[u,v\in C]\bigr).
\]
A negative pair contributes $w_{uv}\Prb[u,v\in C]$.
Using \cref{lem:one-marginal}, and writing $p_{uv}=\Prb[u,v\in C]$,
we obtain
\begin{align*}
  \E[\cost(C)]
  &=\sum_{uv\in E^+}w_{uv}\left(\frac1{y_\emptyset}-p_{uv}\right)
     +\sum_{uv\in E^-}w_{uv}p_{uv}\\
  &\le\frac1{y_\emptyset}
      \left(\sum_{uv\in E^+}w_{uv}(1-y_{uv})
            +\sum_{uv\in E^-}w_{uv}y_{uv}\right)\\
  &\qquad+\sum_{uv\in\binom{V}{2}}w_{uv}
         \left|p_{uv}-\frac{y_{uv}}{y_\emptyset}\right|\\
  &\le\frac{(1+\eps)\OPT}{y_\emptyset}
       +\frac{\eps\OPT}{y_\emptyset},
\end{align*}
by \cref{lem:bdd-feasible,lem:weighted-error}.
\end{proof}

\begin{lemma}[One-cluster range]\label{lem:cluster-range}
Every output of \cref{alg:sample} satisfies
\[
  0\le\cost(C)\le\obj(\Kcal)+n\,w(E_{\mathrm{adm}}).
\]
\end{lemma}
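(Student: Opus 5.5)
Nonnegativity is immediate from the definition \cref{eq:cluster-cost}, since every term is a nonnegative weight. For the upper bound I split $\cost(C)=\tfrac12w^+(C,V\setminus C)+\tfrac12w^-(C,C)$ into pairs that lie within a single atom and pairs that join distinct atoms. Every output $C$ is a union of atoms, so a positive within-atom pair is never cut. A negative within-atom pair $uv$ with both endpoints in $C$ contributes $w_{uv}$ to $\tfrac12w^-(C,C)$, which counts it twice with factor one half. Hence the within-atom contribution is at most the total negative within-atom weight. That weight is exactly the negative part of $\obj(\Kcal)$.

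For inter-atom pairs, the positive separated pairs contribute at most $\tfrac12\sum w^+$ over inter-atom pairs. Each such pair $uv$ with $u\in C$ is counted once in $w^+(C,V\setminus C)$, so this contribution is at most the total positive inter-atom weight. Every such pair is cut by the atom partition, so it is also the positive part of $\obj(\Kcal)$. Together with the previous paragraph, these two parts give at most $\obj(\Kcal)$.

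The remaining terms are the negative inter-atom pairs inside $C$. Pairs between the seed atom and another selected atom are admissible by the seed-neighborhood property, so their total is at most $w(E_{\mathrm{adm}})$. Pairs $uv$ with both $u,v$ in selected atoms other than $K_a$ need not be admissible. For these I use the triangle inequality through the first seed $a$: $w_{uv}\le w_{ua}+w_{av}$, where $au$ and $av$ are admissible. Summing over $v$ first, each admissible pair $au$ is charged at most $n$ times. This gives $\sum_{uv\subseteq C}w_{uv}\le n\sum_{u\in N(a)}w_{ua}\le n\,w(E_{\mathrm{adm}})$, and this single bound can also absorb the seed-atom pairs.

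The main obstacle is fitting everything into the stated bound without an extra constant. If the crude count of the seed-atom pairs plus the charged pairs exceeds $n\,w(E_{\mathrm{adm}})$, I would refine the charge. For $v\in K_a$ I would use $w_{uv}\le w_{ua}+w_{av}$ with $w_{av}$ a within-atom term, to be absorbed into the atom part of the bound. For general pairs the count $\tfrac12\sum_{u,v}(w_{ua}+w_{va})=|C|\sum_u w_{ua}$ is at most $n\sum_{u\in N(a)}w_{ua}$ once the seed atom's members are handled. The case with no seed (probability zero) can be ignored, since $T\ge1$ always.
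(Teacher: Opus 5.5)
Your decomposition is the paper's. Cut positive pairs, which are all inter-atom because $C$ is a union of atoms, and negative within-atom pairs are paid for by $\obj(\Kcal)$. The remaining negative inter-atom pairs are charged through the seed atom, using the seed-neighborhood property and the constancy of admissibility on atom pairs. The only difference in route is this. The paper averages the triangle inequality over all of $K_a$ via \cref{prop:self-cross}, which gives $\tfrac12w(U,U)+w(U,K_a)\le\tfrac{|C|}{|K_a|}w(U,K_a)$ with $U=C\setminus K_a$. You instead detour through the single vertex $a$. Either works.

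The gap is the final count, which you flag but do not close. As written, you bound the pairs inside $U$ by $n\,w(E_{\mathrm{adm}})$ and the pairs between $U$ and $K_a$ by another $w(E_{\mathrm{adm}})$, for a total of $(n+1)w(E_{\mathrm{adm}})$. Two of your steps fail here:
\begin{itemize}
\item The claim that the single bound ``absorbs'' the seed-atom pairs is unjustified.
\item The repair of moving $w_{av}$, for $v\in K_a$, into the atom part does not work. $w_{av}$ is a within-atom distance, not admissible, and $\obj(\Kcal)$ does not control it. That budget is already spent on cut positive pairs and negative within-atom pairs, and positive within-atom pairs, however heavy, contribute nothing to $\obj(\Kcal)$.
\end{itemize}
The same objection applies to $\tfrac12\sum_{u,v\in C}(w_{ua}+w_{va})=|C|\sum_{u\in C}w_{ua}$, which contains the terms $w_{ua}$ with $u\in K_a$.

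The fix is to count exactly and use only $u,v\in U$. Each $u\in U$ lies in $|U|-1$ pairs, so
\[
  \sum_{\{u,v\}\subseteq U}w_{uv}
  \le(|U|-1)\sum_{u\in U}w_{ua}
  \le(|U|-1)\,w(U,K_a).
\]
Adding $w(U,K_a)$ for the seed-atom pairs gives $|U|\,w(U,K_a)\le(n-1)\,w(E_{\mathrm{adm}})$. This uses $a\in K_a$, $|U|\le n-1$, and the fact that every pair joining $U$ to $K_a$ is admissible. With this count your argument proves the lemma, and no within-atom absorption is needed.
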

\begin{proof}
The lower bound follows from nonnegativity. Since $C$ is a union
of atoms, every positive pair cut by $C$ is an inter-atom pair.
The positive-cut term in $\cost(C)$ and the negative pairs inside
individual atoms are together bounded by $\obj(\Kcal)$.

Let $K_a$ be the first seed atom, which is included in $C$, and
put $U=C\setminus K_a$. Every atom in $U$ is admissible to $K_a$.
The remaining negative inter-atom cost is at most
\[
  \frac12w(U,U)+w(U,K_a)
  \le\left(\frac{|U|}{|K_a|}+1\right)w(U,K_a)
  =\frac{|C|}{|K_a|}w(U,K_a)
  \le n\,w(E_{\mathrm{adm}}),
\]
where \cref{prop:self-cross} gives the first inequality. For the
last inequality, admissibility is constant across each atom pair,
so every pair joining $U$ to $K_a$ is in $E_{\mathrm{adm}}$ and
$w(U,K_a)\le w(E_{\mathrm{adm}})$. Also,
$|C|/|K_a|\le n$. No pairwise admissibility assumption is needed
between atoms contained in $U$.
If $U=\emptyset$, this contribution is zero and the same bound
holds. Adding the two contributions proves the result.
\end{proof}

\section{Constructing the cluster-LP solution}\label{sec:construction}

The one-cluster sampler preserves vertex marginals in expectation.
We now obtain a finite-support vector that satisfies every coverage
constraint exactly. The construction follows the empirical-frequency
method: scale the sampled mass down slightly, check both sides of
the coverage deviation, and fill each deficit using the corresponding
atom as a cluster.

For the concentration analysis, first fix a particular realization
of the initialization satisfying $\mathcal E_{\mathrm{init}}$, the
resulting $(\Kcal,E_{\mathrm{adm}})$, and an optimal bounded-LP
solution $y$. Until we combine the failure probabilities below, all
probabilities and expectations are over fresh, independent calls to
\cref{alg:sample} with this fixed input. In particular,
$y_\emptyset$ is a fixed number. The bounds are uniform over every
such successful upstream realization, so they can subsequently be
averaged over the initialization randomness. We do not infer
independence merely by conditioning on the success event, and we
do not condition the sample distribution on the no-abort event.

Let $\eps_1:=\eps/\alpha$, so that $0<\eps_1\le1$.
Store cluster-LP variables as a sparse dictionary, with absent
entries interpreted as zero. In particular, initialization does not
enumerate all subsets of $V$.

\begin{algorithm}[htbp]
\caption{ConstructClusterLPSolution}\label{alg:construct-z}
\begin{algorithmic}[1]
\Require The bounded-LP solution, an integer $M$, and $\eps_1=\eps/\alpha$
\Ensure A feasible finite-support vector $z$, or an abort signal
\State Initialize an empty dictionary $z$ with default value zero
\For{$i=1,\ldots,M$}
  \State $C_i\gets\Call{SampleOneCluster}{}$, independently
  \State $z_{C_i}\gets z_{C_i}+y_\emptyset/((1+\eps_1)M)$
\EndFor
\ForAll{$K\in\Kcal$}
  \State $x_K\gets1-\sum_{S\supseteq K}z_S$
  \If{$x_K<0$ or $x_K>2\eps_1/(1+\eps_1)$}
    \State \textbf{abort}
  \EndIf
  \State $z_K\gets z_K+x_K$
\EndFor
\State \Return $z$
\end{algorithmic}
\end{algorithm}

All vertices of an atom have identical sampled coverage.
Adding $x_K$ copies of the atom $K$ changes no other atom's
coverage. If the algorithm does not abort, $z$ is nonnegative
and satisfies $\sum_{S\ni u}z_S=1$ for every $u$. Defining
$x_{uv}=1-\sum_{S\supseteq\{u,v\}}z_S$ makes it a feasible
cluster-LP solution. Its support has size at most $M+|\Kcal|$.

We use two concentration bounds. If $X$ is a sum of independent
Bernoulli variables with mean $\mu$, then
\begin{equation}\label{eq:chernoff}
  \Prb[|X-\mu|>\delta\mu]\le2e^{-\delta^2\mu/3}
  \qquad(0\le\delta\le1).
\end{equation}
If independent variables $X_i$ lie in intervals $[a_i,b_i]$, then
\begin{equation}\label{eq:hoeffding}
  \Prb\left[\sum_iX_i-\E\sum_iX_i>t\right]
  \le\exp\left(-\frac{2t^2}{\sum_i(b_i-a_i)^2}\right).
\end{equation}

\subsection{Abort probability}

For an atom $K$, let $N_K$ count sampled clusters containing it.
By \cref{lem:one-marginal}, $N_K$ is a sum of independent Bernoulli
variables with mean $M/y_\emptyset$. Before its correction, the
atom's coverage deficit is
\[
  x_K=1-\frac{y_\emptyset N_K}{(1+\eps_1)M}.
\]
Thus $0\le x_K\le2\eps_1/(1+\eps_1)$ precisely when
\[
  \frac{(1-\eps_1)M}{y_\emptyset}
  \le N_K\le
  \frac{(1+\eps_1)M}{y_\emptyset}.
\]
The upper bound prevents over-coverage, and the lower bound limits
the cost of the eventual correction. Applying \cref{eq:chernoff}
and taking a union bound over at most $n$ atoms gives
\begin{equation}\label{eq:abort-prob}
  \Prb[\text{algorithm aborts}]
  \le2n\exp\left(-\frac{\eps_1^2M}{3y_\emptyset}\right).
\end{equation}

\subsection{Objective value}

On any outcome on which no abort occurs,
\begin{align}
  L_w(z)
  &\le\frac{y_\emptyset}{1+\eps_1}
           \frac1M\sum_{i=1}^M\cost(C_i)
       +\frac{2\eps_1}{1+\eps_1}\sum_{K\in\Kcal}\cost(K)\notag\\
  &\le\frac{y_\emptyset}{1+\eps_1}
           \frac1M\sum_{i=1}^M\cost(C_i)
       +\frac{2\eps}{1+\eps_1}\OPT.
       \label{eq:z-obj-preconc}
\end{align}
We used $\sum_K\cost(K)=\obj(\Kcal)\le\alpha\OPT$.
This is an outcome-wise inequality. The concentration estimate below
is applied to the original independent samples, without conditioning
their joint distribution on the no-abort event.

By \cref{lem:expected-cluster-cost,lem:cluster-range},
\[
  \E[\cost(C_i)]\le\frac{(1+2\eps)\OPT}{y_\emptyset},
  \qquad
  0\le\cost(C_i)\le(\alpha+nR_\eps)\OPT.
\]
When $\OPT>0$, Hoeffding's inequality gives
\begin{align}
 &\Prb\left[
   \frac1M\sum_{i=1}^M\cost(C_i)
      >\frac{(1+3\eps)\OPT}{y_\emptyset}\right]\notag\\
 &\qquad\le
   \exp\left(-\frac{2\eps^2M}
        {y_\emptyset^2(\alpha+nR_\eps)^2}\right).
   \label{eq:cost-concentration}
\end{align}
If $\OPT=0$, the range bound makes every sampled cost zero, so the
same failure event has probability zero. No division by $\OPT$
or separate estimate of it is needed.

On the intersection of no abort and the complementary event in
\cref{eq:cost-concentration}, we have
\[
  L_w(z)\le
  \frac{1+3\eps}{1+\eps_1}\OPT
     +\frac{2\eps}{1+\eps_1}\OPT
  \le(1+5\eps)\OPT.
\]
Choose
\begin{equation}\label{eq:M-choice}
  M:=\left\lceil
       C_0\,\frac{n^4R_\eps^2\ln(8n)}{\eps^2}
      \right\rceil,
  \qquad C_0:=6.
\end{equation}
Here $\ln$ denotes the natural logarithm. Since
$1\le y_\emptyset\le n$, $R_\eps\ge\max\{1,\alpha\}$, and
$\alpha+nR_\eps\le2nR_\eps$, this choice gives
\[
  \frac{\eps_1^2M}{3y_\emptyset}
     \ge2n^3\ln(8n),
  \qquad
  \frac{2\eps^2M}{y_\emptyset^2(\alpha+nR_\eps)^2}
     \ge3\ln(8n).
\]
Consequently, \cref{eq:abort-prob} is at most $1/(32n)$, and the
cost-failure probability in \cref{eq:cost-concentration} is at most
$(8n)^{-3}$. Their sum is less than $1/(8n)$ for every fixed
successful initialization and fixed LP solution. No independence
between the abort and cost-failure events is required.

Average these uniform conditional bounds over the successful
initialization outputs and add
$\Prb[\mathcal E_{\mathrm{init}}^c]\le1/(8n)$ from
\cref{sec:precluster}. The total failure probability of the complete
construction is therefore at most
\[
  \frac1{8n}+\frac1{32n}+\frac1{(8n)^3}<\frac1{4n}.
\]
Thus initialization failure is included, rather than implicitly
assuming that the initial constant-factor guarantee always holds.

\begin{theorem}\label{thm:clusterlp-5eps}
For every fixed $\eps\in(0,1]$, the complete construction---the
initialization and preclustering in \cref{sec:precluster}, the
bounded-LP computation, and \cref{alg:construct-z}---runs in
polynomial time and, with
probability at least $1-1/(2n)$, returns an explicit feasible
cluster-LP solution of value at most $(1+5\eps)\OPT$.
Its support has size at most $M+|\Kcal|$.
\end{theorem}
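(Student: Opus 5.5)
The theorem essentially collects the pieces already established, so the plan is to assemble them in order and check three things: running time, feasibility, and the probability bound.

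\emph{Running time.} The initialization runs the Fan--Lee--Lee algorithm $\lceil\log_2(8n)\rceil$ times. \Cref{alg:atoms} is clearly polynomial, and \cref{thm:preclustering} gives polynomial-time computation of $\Kcal$ and $E_{\mathrm{adm}}$; this uses only the computable matrix $\Gamma$ with the absolute constant $t_0$. Because $r=\operatorname{poly}(1/\eps)$ is a constant for fixed $\eps$, the bounded sub-cluster LP has $n^{O(r)}$ variables and constraints and is solved in polynomial time. Each call to \cref{alg:sample} uses at most $r-2$ seed steps plus one pass over the atoms. The number of calls is $M=O(n^4R_\eps^2\log n/\eps^2)$, which is polynomial, and the correction loop is linear in $|\Kcal|$. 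Finally, the sparse dictionary has at most $M+|\Kcal|$ entries, which gives the support bound.

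\emph{Feasibility and value.} On a non-aborting run, the discussion after \cref{alg:construct-z} already shows that $z$ is nonnegative and covers every vertex exactly. Defining $x_{uv}$ through \cref{eq:clp-pair} then yields a feasible cluster-LP solution. Its value is $L_w(z)$ by \cref{eq:cluster-cost-decomp}. On the event that there is no abort and the cost concentration does not fail, we have $L_w(z)\le(1+5\eps)\OPT$. This follows from \cref{eq:z-obj-preconc} together with \cref{lem:expected-cluster-cost,lem:cluster-range} and \cref{eq:cost-concentration}. The case $\OPT=0$ is handled by the range bound, as already noted.

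\emph{Probability.} Fix any realization of the initialization satisfying $\mathcal E_{\mathrm{init}}$, and fix an optimal LP solution $y$. With the choice \cref{eq:M-choice} and $1\le y_\emptyset\le n$, the abort probability \cref{eq:abort-prob} is at most $1/(32n)$, and the cost-failure probability is at most $(8n)^{-3}$. These bounds are uniform over such realizations. Averaging them and adding $\Prb[\mathcal E_{\mathrm{init}}^c]\le 1/(8n)$ by a union bound gives total failure below $1/(4n)\le1/(2n)$.

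The only delicate step is this last one. The concentration bounds must be applied to the fresh, independent sampler calls conditional on fixed upstream data, rather than after conditioning on no-abort. I would state explicitly that the outcome-wise inequality \cref{eq:z-obj-preconc} holds on the intersection of the two good events, so no independence between those events is needed.
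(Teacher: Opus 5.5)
Your proposal is correct and follows essentially the same route as the paper. You use the outcome-wise bound on the intersection of the no-abort and cost-concentration events, the uniform conditional failure bounds $1/(32n)$ and $(8n)^{-3}$ averaged over successful initializations plus $\Prb[\mathcal E_{\mathrm{init}}^c]\le 1/(8n)$, and the $n^{O(r)}$ LP size with polynomially many sampler calls for the running time, exactly as the paper does.
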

\begin{proof}
The objective guarantee follows from the outcome-wise bound when
no abort occurs and the average sampled cost is at most
$(1+3\eps)\OPT/y_\emptyset$.
The preceding union bound, including initialization failure, gives
total failure probability less than $1/(4n)$, which is stronger
than the stated bound. Initialization requires $O(\log n)$ calls to
a polynomial-time algorithm. The local LP has size
$n^{O(r)}$, where $r=\operatorname{poly}(1/\eps)$, and the number
of samples is polynomial in $n$ and $1/\eps$. Each sample and each
coverage correction can be implemented from the explicit local
moments and the sparse dictionary. Thus the total running time is
$n^{\operatorname{poly}(1/\eps)}$, which is polynomial for fixed
$\eps$.
\end{proof}

Replacing the internal accuracy by $\eps/5$ gives
\cref{thm:clusterlp-main}, with a slightly stronger success probability
than stated there. To obtain \cref{thm:main}, let
$0<\varepsilon\le1$ be the desired approximation slack. Construct
$z$ with $L_w(z)\le(1+\varepsilon/8)\OPT$ and total construction
failure probability at most $1/(2n)$, including initialization.
Apply the rounding of
\cref{lem:rounding} independently
$O(\varepsilon^{-1}\log n)$ times and keep the cheapest clustering.
With additional failure probability at most $1/(2n)$, its cost is at
most $(2+\varepsilon/2)L_w(z)$. Since
\[
  (2+\varepsilon/2)(1+\varepsilon/8)
  =2+\frac{3\varepsilon}{4}+\frac{\varepsilon^2}{16}
  \le2+\varepsilon,
\]
the final approximation and success probability follow. Instances
with at most one vertex are handled by their unique partition.

\section{Discussion}

Pseudometricity enters the proof in two places. The aggregated triangle
and Ptolemy inequalities give a preclustering certificate whose total
admissible weight is $O_\eps(\OPT)$. The triangle inequality is used a
second time in the sampling analysis to charge weighted pairwise
marginal error to that admissible mass. The fixed-horizon information
argument and the atom-wise coverage correction then produce an explicit
finite-support solution of the cluster LP.

For every fixed $\eps>0$ the running time is polynomial, although the
exponent in $n^{\operatorname{poly}(1/\eps)}$ is large. We have not
optimized the constants in the warm-start and mixing arguments. A more
direct admissibility test or a separation procedure tailored to
pseudometric weights could substantially improve the dependence on
$\eps$.

The final factor-$2$ rounding uses only nonnegativity of the pair
weights. Any improvement below $2$ would therefore have to use
additional structure at the rounding stage or replace this rounding
step. The $10/3$ limitation for standard-LP pivoting does not apply to
the cluster-LP relaxation used here.

\appendix
\section{Detailed proof of the atom-respecting lemma}\label{app:atoms}

We prove \cref{lem:atom-respecting} using three exchange
operations. For an atom $K$ and a clustering $\Ccal$, they are:
\begin{enumerate}[label=(\arabic*)]
\item \emph{Split along $K$.} Replace a cluster $C$ by the nonempty
      sets among $C\cap K$ and $C\setminus K$.
\item \emph{Move one vertex.} Move a vertex to another existing
      cluster or to a new singleton.
\item \emph{Extract $K$.} Remove $K$ from all existing clusters and
      add it as a new cluster.
\end{enumerate}
A singleton atom cannot be split. We therefore consider
nonsingleton atoms.

\paragraph{Zero-diameter atoms.}
Suppose $w(K,K)=0$. Every $u\in K$ has $w(u,K)=0$, so the
non-strict estimate in \cref{lem:atom-stability} implies
\[
  w^+(u,V\setminus K)+w^-(u,K)=0.
\]
Extracting $K$ creates no positive disagreement with vertices outside
$K$ and no negative disagreement inside $K$. It can only decrease
the remaining cost. Thus a zero-diameter atom can be made intact
without increasing the objective or splitting any other atom.

\paragraph{Local optimality for positive-diameter atoms.}
Now suppose $w(K,K)>0$ and a clustering $\Ccal$ splits $K$.
By \cref{prop:self-cross}, every $u\in K$ has $w(u,K)>0$, so
\cref{lem:atom-stability} gives
\begin{equation}\label{eq:app-stability}
  w^+(u,V\setminus K)+w^-(u,K)<\beta w(u,K).
\end{equation}
Assume that neither splitting along $K$ nor moving one vertex
strictly improves $\Ccal$. The first assumption implies, for every
cluster $C$,
\begin{equation}\label{eq:app-nosplit}
  w^-(C\cap K,C\setminus K)
  \le w^+(C\cap K,C\setminus K).
\end{equation}
The second assumption implies that, for $u\in C$,
\begin{equation}\label{eq:app-nomove}
  w^+(u,C)-w^-(u,C)
  =\max\left\{0,\max_{C'\in\Ccal}
      \bigl(w^+(u,C')-w^-(u,C')\bigr)\right\}.
\end{equation}
We will show that extracting $K$ then strictly improves the cost.

\paragraph{Bounding the weight of each fragment.}
Fix a cluster $C$ meeting $K$, and put
$A=C\cap K$, $B=K\setminus C$, and $Q=C\setminus K$.
Both $A$ and $B$ are nonempty, because $\Ccal$ splits $K$.
We claim that
\begin{equation}\label{eq:app-fragment}
  w(A,A)<c_\beta w(A,B),
  \qquad c_\beta:=\frac{4+16\beta}{1-10\beta}.
\end{equation}
First, $w(A,B)>0$: if it were zero, \cref{prop:self-cross} applied
in both directions would give $w(A,A)=w(B,B)=0$, contradicting
$w(K,K)>0$. If $w(A,A)=0$, \cref{eq:app-fragment} is immediate.
We may therefore assume $w(A,A)>0$ in the calculations below.

For any $u\in K$, not necessarily in $C$,
\begin{equation}\label{eq:app-score-lower1}
  w^+(u,C)-w^-(u,C)
  \ge w(u,A)-2w^-(u,A)-w(u,Q).
\end{equation}
Averaging the triangle inequality over $A$ gives
\begin{equation}\label{eq:app-outside-bound}
  w(u,Q)\le\frac{w(A,Q)}{|A|}
              +\frac{|Q|}{|A|}w(u,A).
\end{equation}
The inequalities in \cref{prop:self-cross} imply
\[
  \frac1{|A|}\le\frac{2w(u,A)}{w(A,A)},
  \qquad
  \frac{|Q|}{|A|}\le\frac{2w(A,Q)}{w(A,A)}.
\]
For $Q=\emptyset$, the second inequality holds directly.
Using \cref{eq:app-nosplit,eq:app-stability}, we also have
\[
  w(A,Q)\le2w^+(A,Q)
     <2\beta w(A,K)
     =2\beta\bigl(w(A,A)+w(A,B)\bigr).
\]
Set $c:=w(A,A)/w(A,B)>0$. Substituting these estimates into
\cref{eq:app-outside-bound,eq:app-score-lower1} gives
\begin{equation}\label{eq:app-score-lower}
  w^+(u,C)-w^-(u,C)
  >\left(1-8\beta\left(1+\frac1c\right)\right)w(u,A)
      -2w^-(u,A).
\end{equation}

Choose $u\in B$ minimizing $w(u,B)$, and let $C'$ contain $u$.
Since $C'\cap K\subseteq B$,
\begin{equation}\label{eq:app-score-upper}
  w^+(u,C')-w^-(u,C')
  \le w(u,B)+w^+(u,C'\setminus K).
\end{equation}
By \cref{eq:app-nomove}, the score of $C'$ is at least that of $C$.
Furthermore,
\[
  w^+(u,C'\setminus K)+2w^-(u,A)
  \le2\bigl(w^+(u,V\setminus K)+w^-(u,K)\bigr)
  <2\beta\bigl(w(u,A)+w(u,B)\bigr).
\]
Combining \cref{eq:app-score-lower,eq:app-score-upper} thus yields
\begin{equation}\label{eq:app-combined-score}
  \left(1-8\beta\left(1+\frac1c\right)\right)w(u,A)
  <(1+2\beta)w(u,B)+2\beta w(u,A).
\end{equation}
By the choice of $u$ and two applications of
\cref{prop:self-cross},
\[
  \frac{w(u,B)}{w(u,A)}
  \le\frac{w(B,B)}{|B|}\frac{2|A|}{w(A,A)}
  \le\frac{4w(A,B)}{w(A,A)}=\frac4c.
\]
All denominators here are positive under the preceding case
assumptions. Dividing \cref{eq:app-combined-score} by $w(u,A)$ gives
\[
  1-8\beta\left(1+\frac1c\right)
  <(1+2\beta)\frac4c+2\beta.
\]
Rearranging gives $(1-10\beta)c<4+16\beta$, proving
\cref{eq:app-fragment}.

\paragraph{The extraction comparison.}
Extracting $K$ separates its vertices from the outside vertices
currently sharing their clusters and joins its fragments together.
It is strictly improving exactly when
\begin{align}
 &\sum_{C\in\Ccal}w^+(C\cap K,C\setminus K)
   +\frac12\sum_{C\in\Ccal}w^-(C\cap K,K\setminus C)\notag\\
 &\qquad<
   \sum_{C\in\Ccal}w^-(C\cap K,C\setminus K)
   +\frac12\sum_{C\in\Ccal}w^+(C\cap K,K\setminus C).
   \label{eq:app-extract-condition}
\end{align}
Let the sums below range over the clusters meeting $K$; empty
fragments contribute zero. From \cref{eq:app-stability,eq:app-fragment},
\begin{align*}
 &\sum_C\bigl(w^+(C\cap K,C\setminus K)
                 +w^-(C\cap K,K\setminus C)\bigr)\\
 &\quad<\beta\sum_C w(C\cap K,K)\\
 &\quad=\beta\sum_C
         \bigl(w(C\cap K,C\cap K)+w(C\cap K,K\setminus C)\bigr)\\
 &\quad<\beta\frac{5+6\beta}{1-10\beta}
                  \sum_C w(C\cap K,K\setminus C)\\
 &\quad\le\frac12\sum_C w(C\cap K,K\setminus C).
\end{align*}
The final inequality uses \cref{eq:beta-range}.
Subtracting one half of the summed negative cross-fragment weight
from both sides gives
\begin{align*}
 &\sum_C w^+(C\cap K,C\setminus K)
     +\frac12\sum_C w^-(C\cap K,K\setminus C)\\
 &\qquad<\frac12\sum_C w^+(C\cap K,K\setminus C).
\end{align*}
Adding the nonnegative quantity
$\sum_C w^-(C\cap K,C\setminus K)$ to the right establishes
\cref{eq:app-extract-condition}.

In an optimal clustering, the first two exchange operations cannot
improve the cost. Hence no positive-diameter atom is split.
Extracting any split zero-diameter atoms does not increase the cost
and does not split the other atoms. The resulting optimum respects
all atoms, completing the proof of \cref{lem:atom-respecting}.

\begingroup
\footnotesize
\bibliographystyle{alpha}
\bibliography{new-refs}
\endgroup

\end{document}